\documentclass{IEEEtran}
\usepackage{cite}
\usepackage{amsmath,amssymb,amsthm,amsfonts}
\usepackage{algorithmic}
\usepackage{graphicx}
\usepackage{textcomp}
\usepackage{subfigure}
\usepackage{multirow}
\usepackage{booktabs}
\usepackage{diagbox}
\usepackage{autobreak}
\usepackage{booktabs}
\usepackage{footmisc}
\usepackage{stfloats}
\usepackage{threeparttable}

\columnsep 0.21in
\begin{document}
\title{Outage Analysis for Intelligent Reflecting Surface Assisted Vehicular Communication Networks}
\author{
\IEEEauthorblockN{Jue Wang\IEEEauthorrefmark{1}, Wence Zhang\IEEEauthorrefmark{2}, Xu Bao\IEEEauthorrefmark{2}, Tiecheng Song \IEEEauthorrefmark{1}, Cunhua Pan\IEEEauthorrefmark{3}
}

\IEEEauthorblockA{\IEEEauthorrefmark{1}National Mobile Communication Research Laboratory, Southeast University, Nanjing, China}

\IEEEauthorblockA{\IEEEauthorrefmark{2}School of Computer Science and Communication Engineering, Jiangsu University, Zhenjiang, China}

\IEEEauthorblockA{\IEEEauthorrefmark{3}School of Electronic Engineering and Computer Science, Queen Mary University of London, London, U.K.}
\\ e-mail: \{{220180893, songtc}\}@seu.edu.cn, \{{wencezhang, xbao}\}@ujs.edu.cn, c.pan@qmul.ac.uk
}

\maketitle
\pagestyle{empty}
\thispagestyle{empty}

\begin{abstract}
Vehicular communication is an important application of the fifth generation of mobile communication systems (5G). Due to its low cost and energy efficiency, intelligent reflecting surface (IRS) has been envisioned as a promising technique that can enhance the coverage performance significantly by passive beamforming. In this paper, we analyze the outage probability performance in IRS-assisted vehicular communication networks. We derive the expression of outage probability by utilizing series expansion and central limit theorem. Numerical results show that the IRS can significantly reduce the outage probability for vehicles in its vicinity. The outage probability is closely related to the vehicle density and the number of IRS elements, and better performance is achieved with more reflecting elements.
\end{abstract}

\begin{IEEEkeywords}
Vehicular communication networks, intelligent reflecting surface, outage probability, series expansion, central limit theorem.
\end{IEEEkeywords}

\section{Introduction}
Vehicular communication networks have been extensively investigated to realize the concept of intelligent transportation systems (ITS) \cite{8594703}. As an important application of the fifth generation of mobile communication systems (5G), vehicular communication networks require high data-rate, low-latency transmission and high reliability. Traditional vehicular communication has been supported by the dedicated short-range communication (DSRC) standard. However, DSRC may not effectively meet the requirements of high data rate due to its limited coverage and capacity\cite{8539687}.

Therefore, to improve the data rate, the millimeter wave (mmWave) communications have been introduced into vehicular communication networks. With larger available bandwidth, mmWave is able to support the high data rate and low latency communications for the emerging vehicular applications\cite{8594703}. Nevertheless, due to its various problems such as high path loss and vulnerability to blockage\cite{8617303}, the beam-tracking and transmission outage remain a challenge for mmWave communications\cite{8594703}. Therefore, high outage probability and low coverage rate are still unsolved problems for mmWave vehicular communications.

Among the latest technologies for green and effective wireless communication, the intelligent reflecting surface (IRS) has emerged as an innovative and cost-effective paradigm for improving the transmission coverage and signal quality via passive reflecting arrays\cite{pan2019multicell}. Owing to the tunable phase shifts of all reflecting elements, signal enhancement and interference suppression can be achieved by the IRS without the use of active transmitters\cite{pan2019multicell}.
The authors of\cite{8796365} provide a detailed overview and historical perspective on state-of-the-art solutions of IRS-aided communication systems. Various aspects of IRS have recently been investigated in terms of channel estimation\cite{8879620,03272}, data rate maximization\cite{pan2019inte,8683145}, physical layer security\cite{hong2019}, etc.

To the best of the authors' knowledge, there are a paucity of contributions studying the integration of IRS in vehicular network systems. In\cite{12183}, the physical layer security of vehicular networks employing IRS was studied and two IRS-based vehicular network system models were designed. However, the work only focused on the secrecy capacity and the outage performance of IRS-assisted vehicular networks has not been studied. Given the blockage and vehicle density in practical road conditions, the difficulty of analysis would increase, which were not considered in\cite{12183}.

In this paper, we propose to use IRS in vehicular networks to improve the coverage and outage performance. We establish a model for an IRS-assisted downlink vehicular communication system with Poisson Point Process (PPP) distribution and derive the expression of outage probability based on series expansion and central limit theorem. The outage performance of traditional vehicular networks and IRS-assisted vehicular networks are also compared.

The main contributions of this work are listed as follows.
\begin{itemize}
\item We establish an IRS-assisted downlink vehicular communication model by considering a more practical scenario with vehicle blockings and density of vehicles.
\item Accurate approximations of the outage probability are derived based on series expansion and central limit theorem. Lower vehicle density and larger size of IRS will improve the outage probability performance.
\item Simulation analysis is conducted and it is shown that the outage probability can be reduced significantly by using large IRS. Besides, IRS can effectively improve the coverage performance of vehicular communication networks.
\end{itemize}

The rest of this paper is summarized as follows. In Section II, the IRS-assisted downlink channel model is given and the problem for outage analysis is formulated. In Section III, the expression for outage probability is derived based on series expansion and central limit theorem. The numerical results are presented in Section VI and conclusions are drawn in Section V.

\section{System Model and Problem Formulation}
In this paper, we consider an IRS-assisted downlink vehicular communication system, as illustrated in Fig. 1.
Each direction of the road includes two lanes.
A single-antenna roadside unit (RSU) is deployed on the separating strip between lanes of the two directions.
The single-antenna target user locates at the outer lane and the direct link between the RSU and the target user may be blocked by vehicles at the inner lane.
To assist the communication between the RSU and the target user, an IRS consisting of $M$ passive reflecting elements is deployed at the other roadside.
\begin{figure}
    \centering
    \includegraphics[scale=0.40]{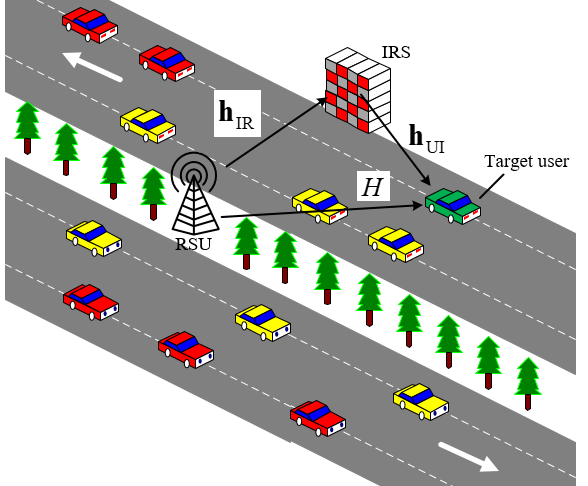}
    \caption{An illustration of IRS-assisted vehicle communication systems}
\end{figure}

The channel of the RSU-user link, RSU-IRS link and IRS-user link are denoted by $H\in\mathbb C$, $\mathbf{h} _\mathrm{IR}\in\mathbb C^{M \times 1}$ and $\mathbf{h} _\mathrm{UI}\in\mathbb C^{M \times 1}$, respectively. Since the locations of obstacle vehicles are changing all the time, the RSU-user and the RSU-IRS links may be blocked. Therefore, $H$ and $\mathbf{h} _\mathrm{IR}$ may experience different large-scale fading.

\subsection{Channel Model}

In order to take into account the blocking that exists in RSU-user and RSU-IRS links, we assume that the shape of vehicles are rectangle and the length of each vehicle is $\tau$. When the direct link connecting the transmitter and the receiver passes through obstacle vehicles, the channel is blocked. Hence, to avoid channel blocking, no vehicle should exist in the vicinity of the direct link with a range of $\frac{\tau }{2}$, as depicted in Fig. 2. When blocking happens, it indicates an NLOS channel between the transmitter and the receiver. Otherwise, it is a LOS channel.
\begin{figure}
    \centering
    \includegraphics[scale=0.45]{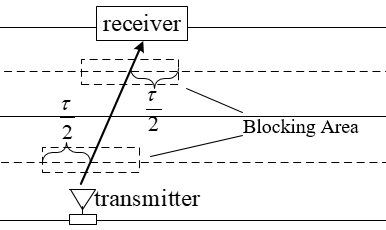}
    \caption{An illustration of channel blocking}
\end{figure}

Poisson Point Process (PPP) is very useful for modeling the distribution of vehicles on a lane\cite{TassiModeling}. We use two PPPs, i.e., $\mathrm{\Phi _O}$ with density $\lambda _\mathrm O$ and $\mathrm{\Phi _U}$ with density $\lambda _\mathrm U$, to model the vehicle distribution in the obstacle lane and the user lane, respectively. The density value is related to the average speed and density of vehicles. For a PPP with density $\lambda$, let us denote $N\left( {\left[ {a,b} \right)} \right)$ as the number of points (events) within the interval of $\left[ {a,b} \right)$, the distribution of which is derived as
\begin{equation}
  \mathrm P\left[ {N\left( {\left[ {a,b} \right)} \right){\rm{ = }}k} \right] = {\mathrm e^{ - \lambda \left( {b - a} \right)}}\frac{{{{\left[ {\lambda \left( {b - a} \right)} \right]}^k}}}{{k!}}.
\end{equation}

The probabilities of the RSU-IRS channel being unblocked (LOS) and blocked (NLOS) are denoted by ${p_\mathrm{IR,L}}$ and ${p_\mathrm{IR,N}} = 1 - {p_\mathrm{IR,L}}$, respectively, while for the RSU-user channel, the probabilities are denoted by ${p_\mathrm{UR,L}}$ and ${p_\mathrm{UR,N}} = 1 - {p_\mathrm{IR,L}}$, respectively. By some mathematical manipulations, we obtain
\begin{subequations}
\begin{equation}
  {p_\mathrm{IR,L}} = {\mathrm e^{ - {\lambda _\mathrm{O}}\tau }}{\mathrm e^{ - {\lambda _\mathrm{U}}\tau }}, \ {p_\mathrm{IR,N}} = 1 - {\mathrm e^{ - {\lambda _\mathrm{O}}\tau }}{\mathrm e^{ - {\lambda _\mathrm{U}}\tau }};
\end{equation}
\begin{equation}
  {p_\mathrm{UR,L}} = {\mathrm e^{ - {\lambda _\mathrm{O}}\tau }}, \ {p_\mathrm{UR,N}} = 1 - {\mathrm e^{ - {\lambda _\mathrm{O}}\tau }}.
\end{equation}
\end{subequations}

NLOS and LOS scenarios result in different path loss and small-scale fading. The path loss of a specific link at a distance of $r$ is modeled as
\begin{equation}
  l(r) = {{\delta} _\mathrm L}{C_\mathrm L}{r^{ - {\alpha(z) _\mathrm L}}} + (1 - {{\delta} _\mathrm L}){C_\mathrm N}{r^{ - {\alpha(z) _\mathrm N}}},
\end{equation}
where ${\alpha(z) _\mathrm L}$ and ${\alpha(z) _\mathrm N}$ are the pass loss exponents, while ${C _\mathrm L}$ and ${C _\mathrm N}$ are the path loss intercept factors of the LOS and NLOS cases, respectively. In (3), $\delta _\mathrm L$ is equal to one for LOS and zero for NLOS.

Regarding small-scale fading, Rayleigh and Rician fading channels are used for modeling NLOS and LOS cases, respectively.
The probability density function (PDF) of Rayleigh and Rician distribution are given, respectively, by
\begin{subequations}
\begin{equation}
  p_\mathrm{ray}(x){\rm{ = }}\left\{ {\begin{array}{*{20}{c}}
{\frac{x}{{\sigma _0^2}}{\mathrm e^{ - \frac{{{x^2}}}{{2\sigma _0^2}}}},{\kern 1pt} {\kern 1pt} {\kern 1pt} {\kern 1pt} {\kern 1pt} {\kern 1pt} {\kern 1pt} {\kern 1pt} {\kern 1pt} {\kern 1pt} x > 0}\\
{0,{\kern 1pt} {\kern 1pt} {\kern 1pt} {\kern 1pt} {\kern 1pt} {\kern 1pt} {\kern 1pt} {\kern 1pt} {\kern 1pt} {\kern 1pt} {\kern 1pt} {\kern 1pt} {\kern 1pt} {\kern 1pt} {\kern 1pt} {\kern 1pt} {\kern 1pt} {\kern 1pt} {\kern 1pt} {\kern 1pt} {\kern 1pt} {\kern 1pt} {\rm{otherwise}}}
\end{array}} \right.,
\end{equation}
\begin{equation}
  p_\mathrm{ri}(x){\rm{ = \!}}\left\{ {\begin{array}{*{20}{c}}
{p(x) = \frac{x}{{\sigma _m^2}}\exp\! \left( { - \frac{{{x^2} + {A^2}}}{{2\sigma _m^2}}} \right){I_0}\left( {\frac{{xA}}{{\sigma _m^2}}} \right),{\kern 1pt} {\kern 1pt} {\kern 1pt} {\kern 1pt} {\kern 1pt} {\kern 1pt} {\kern 1pt} {\kern 1pt} {\kern 1pt} {\kern 1pt} x > 0}\\
{0,{\kern 1pt} {\kern 1pt} {\kern 1pt} {\kern 1pt} {\kern 1pt} {\kern 1pt} {\kern 1pt} {\kern 1pt} {\kern 1pt} {\kern 1pt} {\kern 1pt} {\kern 1pt} {\kern 1pt} {\kern 1pt} {\kern 1pt} {\kern 1pt} {\kern 1pt} {\kern 1pt} {\kern 1pt} {\kern 1pt} {\kern 1pt} {\kern 1pt} {\kern 1pt} {\kern 1pt} {\kern 1pt} {\kern 1pt} {\kern 1pt} {\kern 1pt} {\kern 1pt} {\kern 1pt} {\kern 1pt} {\kern 1pt} {\kern 1pt} {\kern 1pt} {\kern 1pt} {\kern 1pt} {\kern 1pt} {\kern 1pt} {\kern 1pt} {\kern 1pt} {\kern 1pt} {\kern 1pt} {\kern 1pt} {\kern 1pt} {\kern 1pt} {\kern 1pt} {\kern 1pt} {\kern 1pt} {\kern 1pt} {\kern 1pt} {\kern 1pt} {\kern 1pt} {\kern 1pt} {\kern 1pt} {\kern 1pt} {\kern 1pt} {\kern 1pt} {\kern 1pt} {\kern 1pt} {\kern 1pt} {\kern 1pt} {\kern 1pt} {\kern 1pt} {\kern 1pt} {\kern 1pt} {\kern 1pt} {\kern 1pt} {\kern 1pt} {\kern 1pt} {\kern 1pt} {\kern 1pt} {\kern 1pt} {\kern 1pt} {\kern 1pt} {\kern 1pt} {\kern 1pt} {\kern 1pt} {\kern 1pt} {\kern 1pt} {\kern 1pt} {\kern 1pt} {\kern 1pt} {\kern 1pt} {\kern 1pt} {\kern 1pt} {\kern 1pt} {\kern 1pt} {\kern 1pt} {\kern 1pt} {\kern 1pt} {\kern 1pt} {\kern 1pt} {\kern 1pt} {\kern 1pt} {\kern 1pt} {\kern 1pt} {\kern 1pt} {\kern 1pt} {\kern 1pt} {\kern 1pt} {\kern 1pt} {\kern 1pt} {\kern 1pt} {\kern 1pt} {\kern 1pt} {\kern 1pt} {\kern 1pt} {\kern 1pt} {\kern 1pt} {\kern 1pt} {\kern 1pt} {\kern 1pt} {\kern 1pt} {\kern 1pt} {\kern 1pt} {\kern 1pt} {\kern 1pt} {\kern 1pt} {\kern 1pt} {\kern 1pt} {\kern 1pt} {\kern 1pt} {\kern 1pt} {\kern 1pt} {\kern 1pt} {\kern 1pt} {\kern 1pt} {\kern 1pt} {\kern 1pt} {\kern 1pt} {\kern 1pt} {\kern 1pt} {\kern 1pt} {\kern 1pt} {\kern 1pt} {\kern 1pt} {\kern 1pt} {\kern 1pt} {\kern 1pt} {\kern 1pt} \rm{otherwise}}
\end{array}} \right.,
\end{equation}
\end{subequations}
where $\sigma _0^2$ denotes the variance of the components in a Rayleigh variable; $A$ denotes the amplitude of the main signal component of a Rician variable; $\sigma _\mathrm m^2$ denotes the power of the multi-path signal component of a Rician variable; ${I_0}\left( \cdot \right)$ is the modified zero-order Bessel function.

\subsection{Problem Formulation}

Assume that the transmitted symbol $x$ is expressed as
\begin{equation}
  x = \sqrt {{P_\mathrm{tx}}} s,
\end{equation}
where $P_\mathrm{tx}$ is the transmission power, $s$ is the information-carrying symbol which follows the distribution of $C{\cal N}(0,1)$. Hence, the received signal at the target user can be written as
\begin{equation}
  y = \left[ {H\sqrt {l(r_\mathrm{UR})}  + \mathbf h_\mathrm{UI}^\mathrm T\mathbf\Theta {\mathbf h_\mathrm{IR}}\sqrt {l({r_\mathrm{IR}})l({r_\mathrm{UI}})} } \right]x + n,
\end{equation}
where $\mathbf\Theta  = \mathrm{diag}({\mathrm e^{ - \mathrm j{\theta _1}}},{\mathrm e^{ - j{\theta _2}}}, \ldots ,{\mathrm e^{ - \mathrm j{\theta _M}}})$ contains the phase shifting coefficients of the IRS; $n$ denotes the received white Gaussian noise which follows the distribution of $C{\cal N}(0,\sigma _\mathrm{n}^2)$; $r_\mathrm{UR}$, $r_\mathrm{IR}$ and $r_\mathrm{UI}$ denote the distance of RSU-user, RSU-IRS and IRS-User links, respectively.

According to (6), the SNR of the received signal is given by
\begin{equation}
  \begin{array}{l}
{\rm SNR}{\mathrm{ = }}\frac{{{P_{\rm tx}}}}{{\sigma _\mathrm n^2}}{\left| {H\sqrt {l(r)}  + \mathbf h_\mathrm{UI}^\mathrm T\mathbf \Theta {\mathbf h_\mathrm{IR}}\sqrt {l({r_\mathrm{IR}})l({r_\mathrm{UI}})} } \right|^2}.\\
\end{array}
\end{equation}

Note that the ${\rm SNR}$ will be different when $l(r)$ and $l(r_\mathrm{UI})$ change under different road conditions. For simplicity, we denote ${L_\mathrm{D}} = \sqrt {l(r)}$ and ${L_\mathrm R} = \sqrt {l({r_\mathrm{IR}})l({r_\mathrm{UI}})}$.

The channel coefficients can be expressed in detail as
\begin{subequations}
\begin{equation}
  H = |H|{\mathrm e^{\mathrm{j}\varphi }},
\end{equation}
\begin{equation}
  {\mathbf h_\mathrm{UI}} = {\left[ {|{\mathrm h_\mathrm{UI,1}}|{\mathrm e^{\mathrm{j}{\varphi _\mathrm{UI,1}}}},|{\mathrm h_\mathrm{UI,2}}|{\mathrm e^{\mathrm{j}{\varphi _\mathrm{UI,2}}}}, \ldots ,|{\mathrm h_{\mathrm {UI},M}}|{\mathrm e^{\mathrm{j}{\varphi _{\mathrm {UI},M}}}}} \right]^\mathrm T},
\end{equation}
\begin{equation}
  {\mathbf h_\mathrm{IR}} = {\left[ {|{\mathrm h_\mathrm{IR,1}}|{\mathrm e^{\mathrm{j}{\varphi _\mathrm{IR,1}}}},|{\mathrm h_\mathrm{IR,2}}|{\mathrm e^{\mathrm{j}{\varphi _\mathrm{IR,2}}}}, \ldots ,|{\mathrm h_{\mathrm {IR},M}}|{\mathrm e^{\mathrm{j}{\varphi _{\mathrm {IR},M}}}}} \right]^\mathrm T}.
\end{equation}
\end{subequations}

According to \cite{8796365}, the SNR is maximized when the phase shift of the direct and the reflection paths are the same, which means the optimal phase shift is given by ${\theta _i} = {\varphi _{\mathrm{UI},i}} + {\varphi _{\mathrm{IR},i}} - \varphi $. Therefore, (7) becomes
\begin{equation}
  \mathrm{SNR}{\rm{ = }}\frac{{{P_\mathrm{tx}}}}{{\sigma _\mathrm n^2}}{\left| {{L_\mathrm{D}}|H| + {L_\mathrm{R}} \times \sum\limits_{i = 1}^M {|{h_{\mathrm{UI},i}}||{h_{\mathrm{IR},i}}|} } \right|^2}.
\end{equation}

The outage probability of a given threshold $t$ is given by
\begin{equation}
  \mathrm{P_O}(t ){\rm{ = }}\mathrm P(\mathrm{SNR} < t ).
\end{equation}

In (10), the SNR is a function of the channel coefficients between the RSU and the user. Due to the vehicle distribution and the complicated channel distribution which involves both Rayleigh and Rician distributions, a closed-form expression for (10) is very difficult to obtain.

\section{Outage Probability Analysis}

To simplify the analysis of outage probability, let us first examine the summation term $ G\triangleq\sum\nolimits_{i = 1}^M {|{h_{\mathrm {UI},i}}||{h_{\mathrm{IR},i}}|}$ in (9). For a specific $i$, denote the mean and variance of $\zeta_i=|{h_{\mathrm {UI},i}}||{h_{\mathrm{IR},i}}|$ as ${{\cal E}_0}$ and ${{\cal V}_0}$, respectively.
According to CLT, since $\zeta_i$'s are independently identically distributed (i.i.d), we have $ G\sim {\cal N}(M{\cal E}_0,M{\cal V}_0)$ when $M$ is large.
Substituting (9) into (10) yields
\begin{equation}
  \begin{split}
{\mathrm P_\mathrm O}(t )&{\rm{ = }}\mathrm P\left( {\mathrm{SNR} < t } \right)\\
&{\rm{ = }}\mathrm P\left( {\frac{{{P_\mathrm{tx}}}}{{\sigma _\mathrm n^2}}{{\left| {{L_\mathrm D}|H| + {L_\mathrm R} G } \right|}^2} < t } \right)\\
&{\rm{=}}\mathrm P\left( {{L_\mathrm D}|H| + {L_\mathrm R}G < Z} \right).
\end{split}
\end{equation}
where $Z\rm{ = }{\sigma _\mathrm n}\sqrt {\frac{ \emph t }{{{P_\mathrm {tx}}}}}$ is the modified threshold. In the following derivation, we use $\mathrm P_\mathrm O(Z)$ to represent the outage probability.

The analysis focuses on deriving the distribution of the random variable ${Z'}={L_\mathrm D}|H| + {L_\mathrm R}G$ in (11). Then the outage probability can be obtained by integration. Since a closed-form of (11) is very difficult to derive, we provide accurate approximations based on series expansion based approximation (SEA) and central limit theorem based approximation (CLA).

\subsection{Analysis based on SEA}
The distribution of ${Z'}$ varies in different channel conditions. Both the RSU-user link and RSU-IRS link could be LOS or NLOS channels. We consider the following four situations.

\subsubsection{NLOS for both RSU-user and RSU-IRS links}

\newtheorem{thm}{Theorem}
\newtheorem{prop}{Proposition}

Since the RSU-user and RSU-IRS links are both NLOS channels, both $|H|$ and $|{h_{\mathrm{IR},i}}|$ follow Rayleigh distribution in (4a). Note that $|{h_{\mathrm{UI},i}}|$ is always a Rician distributed variable, which follows the distribution in (4b). In this case, denote the mean and variance of $G$ as ${\cal E}_\mathrm N$ and ${\cal V}_\mathrm N$, which are given by
\begin{subequations}
\begin{equation}
{\cal E}_\mathrm N\! = M{\cal E}_0
\!=\!\frac{\pi }{2}M {\sigma _0}{\sigma _\mathrm m}{\mathrm e^{ - \frac{K}{2}}}\!\left[ {(1\! +\! K){I_0}\!\left( {\frac{K}{2}} \right)\! +\! K{I_1}\!\left( {\frac{K}{2}} \right)} \right],
\end{equation}
\begin{equation}
{\cal V}_\mathrm N = M{\cal E}_0 = M \left({4\sigma _0^2\sigma _\mathrm m^2 + 2\sigma _0^2{A^2} - {\cal E}_\mathrm N^2}\right),
\end{equation}
\end{subequations}
where $K{\rm{ = }}{A^2}/({2\sigma _{\rm{m}}^2})$ and $I_1(\cdot)$ is the modified first-order Bessel function.

The PDF of $Z'$ is obtained after some mathematical manipulations as
\begin{equation}
\begin{split}
&f_\mathrm{NN}\left( z \right) =  \frac{1}{{\sqrt {2\pi M{{\cal V}_\mathrm N}} {L_\mathrm R}\sigma _0^2}} \exp\left({{-\frac{\alpha(z)}{2{L_\mathrm D}\sigma _0^2}}}\right) \times \\
&\left\{ {\frac{1}{2}\alpha(z) \sqrt {\frac{\pi }{{\beta}_\mathrm N }} \left[ { 1 +  {\rm{erf}}\left( {\sqrt {\beta_\mathrm N}  \alpha(z) } \right)} \right] +
 \frac{1}{{2{\beta}_\mathrm N }}\mathrm e^{- \beta_\mathrm N {\alpha(z) ^2}}} \right\},
\end{split}
\end{equation}
where $\alpha(z)=\frac{{{L_\mathrm D}\sigma _0^2\left( {z - {L_\mathrm R}M{{\cal E}_\mathrm N}} \right)}}{{M{{\cal V}_\mathrm N}L_\mathrm R^2 + L_\mathrm D^2\sigma _0^2}}$ and ${\beta}_\mathrm N=\frac{{M{{\cal V}_\mathrm N}L_\mathrm R^2 + L_\mathrm D^2\sigma _0^2}}{{2M{{\cal V}_\mathrm N}L_\mathrm R^2\sigma _0^2}}$.

Note that $\mathrm P_\mathrm O(Z)$ is obtained by integration of (13) over $z$. However, the integral is very difficult since (13) contains error function $\rm{erf}(\cdot)$ to derive in closed form. In this work, we provide series expansion based approximation to solve the integral. In (13), $\mathrm{erf}(x)$ can be expressed in the following series as
\begin{equation}
  \begin{split}
{\rm{erf}}\left( x \right)  \approx \frac{2}{{\sqrt \pi  }}\sum\limits_{i = 0}^n  {\frac{{{{\left( { - 1} \right)}^i}}}{{(2i + 1) \cdot i!}}{x^{2i + 1}}}
\end{split}
\end{equation}
where $n$ is the series order.

Applying (14) to (13), we obtain the expression of $\mathrm P_\mathrm O(Z)$.

\begin{figure*}[hb]
\hrulefill
\begin{equation}
  {C_{\mathrm L,ij}} = \frac{1}{{\sqrt {2\pi M{{\cal V}_\mathrm L}} }}\frac{1}{{\sigma _\mathrm m^2}}\mathrm{exp}\left( { - \frac{{{A^2}}}{{2\sigma _\mathrm m^2}}} \right) \frac{1}{{{2^{i + 2j}}\sigma _\mathrm m^{2i + 4j}}}\frac{{{{\left( { - 1} \right)}^i}{A^{2j}}}}{{i!  {{\left( {j!} \right)}^2}}}\frac{1}{{2i + 2j + 2}}{\left( {\frac{{{L_\mathrm R}}}{{{L_\mathrm D}}}} \right)^{2i + 2j + 2}}.
\end{equation}
\begin{equation}
F_\text L(x)=\sum\limits_{i = 0}^n {\frac{{{{\left( { - 1} \right)}^i}{\beta_\mathrm L ^i}}}{{(2i + 1) \cdot i!}}\frac{{{L_\mathrm D}^{2i + 2}\sigma _0^{4i + 4}{{2}^{i + 0.5}}}}{{{\left({M{{\cal V}_\mathrm L}L_\mathrm R^2 + L_\mathrm D^2}\right) ^{i + 0.5}}}}\Gamma \left( {i + 1.5, x} \right)}.
\end{equation}
\end{figure*}

\begin{prop}
When the RSU-user link and the RSU-IRS link are both NLOS, the outage probability can be expressed as
\begin{equation}
  \begin{array}{l}
{\mathrm P_{\mathrm O,\text{NN}}}(Z){\rm{ = }} C({\cal V}_\mathrm N)\times \\\left\{ {\frac{\sqrt{\pi}}{{4{\beta}_\mathrm N }D_\mathrm N} \left[ {{\rm{erf}}\left( {D_\mathrm N \left( {Z - {B_\mathrm N}} \right)} \right) + }\right. {\left.{{\rm{erf}}\left( {D_\mathrm N {B_\mathrm N}} \right)} \right]} - }\right.\\
 \left.{\frac{1}{2}\sqrt {\frac{\pi }{{\beta}_\mathrm N }} {L_\mathrm D}\sigma _0^2\left( {\text e ^{ -a_\mathrm N} \!- \!\text e ^{ -b_\mathrm N}} \right) + 2F_\text N(0) \! - \!F_\text N(b_\mathrm N) \! - \! F_\text N(a_\mathrm N)}\right\},
\end{array}
\end{equation}
where $B_\mathrm N={L_\mathrm R}M{{\cal E}_\mathrm N}$, $D_\mathrm N\!=\!\sqrt {(A_\mathrm N \! + \!2{\beta}_\mathrm N L_\mathrm D^2\sigma _0^4)/(2{A_\mathrm N ^2})}$,\\
$b_\mathrm N\!=\!B_\mathrm N^2/(2A_\mathrm N)$ \  and \
$a_\mathrm N\!=\!\left( {Z\! -\! {B_\mathrm N}} \right)^2\!/(2A_\mathrm N)$ \ with \\$A_\mathrm N\!=\!M{{\cal V}_\mathrm N}L_\mathrm R^2\!+\! L_\mathrm D^2\sigma _0^2$. \ $C(x)\!=\!\left({\!\sqrt {2\pi M{x}} {L_\mathrm R}\sigma _0^2\!}\right)^{\!-1}$ and \\$F_\text N(x)=\sum\limits_{i = 0}^n {\frac{{{{\left( { - 1} \right)}^i}{{\beta}_\mathrm N ^i}}}{{(2i + 1) \cdot i!}}\frac{{{L_\mathrm D}^{2i + 2}\sigma _0^{4i + 4}{{2}^{i + 0.5}}}}{{{\left({M{{\cal V}_\mathrm N}L_\mathrm R^2 + L_\mathrm D^2}\right) ^{i + 0.5}}}}\Gamma \left( {i + 1.5, x} \right)}$ \\
with $\Gamma \left( {\alpha(z) ,x} \right) = \int_x^\infty  {{\mathrm e^{ - t}}{t^{\alpha(z)  - 1}}\mathrm dt}$.
\begin{proof}
In (13), supposing $z - {B_\mathrm N} = t$, the outage probability can be rewritten as
\begin{equation}
\begin{array}{l}
{\mathrm P_\mathrm{O,NN}}(Z ) {\rm{ = }}\!\int_{ - {B_\mathrm N}}^{Z - {B_\mathrm N}} \!{C({\cal V}_\mathrm N)\exp \left( { \!-\! \frac{{{t^2}}}{{2A_\mathrm N }}} \right) } \!\left[ {\frac{1}{{2\beta }}\exp \left( { - \beta \frac{{L_\mathrm D^2\sigma _0^4{t^2}}}{{{A_\mathrm N ^2}}}} \right) }\right. \\
+\frac{1}{2}\sqrt {\frac{\pi }{\beta }} \frac{{{L_\mathrm D}\sigma _0^2t}}{A_\mathrm N } + \left.{\sum\limits_{i = 0}^\infty  {\frac{{{{\left( { - 1} \right)}^i}}}{{(2i + 1) \cdot i!}}{\beta ^i}\frac{{{L_\mathrm D}^{2i + 2}\sigma _0^{4i + 4}{t^{2i + 2}}}}{{{A_\mathrm N ^{2i + 2}}}}} } \right]\mathrm dt\\
{\rm{ = }} C({\cal V}_\mathrm N)\!\left[{ \left({\frac{1}{{4{\beta}_\mathrm N }}\frac{\sqrt{\pi}}{D_\mathrm N} {\rm{erf}}\left( {D_\mathrm N t} \right) \!-\!\! }\right. \left.{{\sqrt {\frac{\pi }{4\beta }} {L_\mathrm D}\sigma _0^2\exp \left( {\! - \frac{{{t^2}}}{{2\eta_\mathrm N }}} \right)} }\right) }\right|_{ - {B_\mathrm N}}^{Z - {B_\mathrm N}}\! \\
\left. {+F_\mathrm N \left({\frac{{{t^2}}}{{2\eta_\mathrm N }}}\right) } \right|_{ - {B_\mathrm N}}^0 - \left.{\left. {F_\mathrm N \left({\frac{{{t^2}}}{{2\eta_\mathrm N }}}\right) } \right|_0^{Z - {B_\mathrm N}}}\right]\\
 {\rm{ = }} C({\cal V}_\mathrm N)\times \left\{ {\frac{1}{{4{\beta}_\mathrm N }}\frac{\sqrt{\pi}}{D_\mathrm N} \left[ {{\rm{erf}}\left( {D_\mathrm N \left( {Z - {B_\mathrm N}} \right)} \right) + }\right.}\right. {\left.{{\rm{erf}}\left( {D_\mathrm N {B_\mathrm N}} \right)} \right]} - \\
 \left.{\frac{1}{2}\sqrt {\frac{\pi }{{\beta}_\mathrm N }} {L_\mathrm D}\sigma _0^2\left( {\text e ^{ -a_\mathrm N} \!-\! \text e ^{ -b_\mathrm N}} \right) + 2F_\text N(0)  - F_\text N(b_\mathrm N)  -  F_\text N(a_\mathrm N)}\right\}.
\end{array}
\end{equation}
\end{proof}
When $n$ goes to infinity, (16) is an accurate expression. However, finite items can guarantee the accuracy of approximation in practical use.
\end{prop}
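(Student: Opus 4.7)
The plan is to evaluate $\mathrm{P_{O,NN}}(Z) = \int_0^Z f_\mathrm{NN}(z)\,dz$ directly from the PDF expression in (13), converting the integrand into a sum of elementary functions and upper-incomplete Gamma functions that can be read off against the target form (16). First I would apply the shift $t = z - B_\mathrm N$ with $B_\mathrm N = L_\mathrm R M \mathcal{E}_\mathrm N$, which translates $\alpha(z)$ into the simple linear expression $L_\mathrm D \sigma_0^2 t / A_\mathrm N$ with $A_\mathrm N = M\mathcal{V}_\mathrm N L_\mathrm R^2 + L_\mathrm D^2 \sigma_0^2$, and puts the outer exponential in the centred form $\exp(-t^2/(2A_\mathrm N))$. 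The integration range becomes $[-B_\mathrm N,\,Z-B_\mathrm N]$, and the prefactor collapses to the constant $C(\mathcal{V}_\mathrm N)$.

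Next I would split the bracketed factor of (13) into its three natural summands: a piece proportional to $\alpha$ alone, a piece proportional to $\alpha \cdot \mathrm{erf}(\sqrt{\beta_\mathrm N}\,\alpha)$, and the piece $\tfrac{1}{2\beta_\mathrm N}\exp(-\beta_\mathrm N \alpha^2)$. The first is odd in $t$ times a Gaussian, so its integral is a difference of two $\exp(-t^2/(2A_\mathrm N))$ values at the endpoints, which rearrange into the $\tfrac{1}{2}\sqrt{\pi/\beta_\mathrm N}\,L_\mathrm D \sigma_0^2 (e^{-a_\mathrm N} - e^{-b_\mathrm N})$ contribution after recognising $a_\mathrm N = (Z-B_\mathrm N)^2/(2A_\mathrm N)$ and $b_\mathrm N = B_\mathrm N^2/(2A_\mathrm N)$. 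The third piece combines with the outer Gaussian to form a single Gaussian whose variance is fixed precisely so that its integral reproduces the error-function term with argument $D_\mathrm N(\cdot)$ in (16); this fixes the identity $D_\mathrm N = \sqrt{(A_\mathrm N + 2\beta_\mathrm N L_\mathrm D^2 \sigma_0^4)/(2 A_\mathrm N^2)}$.

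For the remaining error-function piece I would substitute the Taylor series (14) for $\mathrm{erf}$, interchange the finite sum and the integral, and evaluate each resulting integral $\int t^{2i+2} \exp(-t^2/(2A_\mathrm N))\,dt$ via the change of variables $u = t^2/(2A_\mathrm N)$. Because the range $[-B_\mathrm N,\,Z-B_\mathrm N]$ typically crosses zero, I would split it at $t=0$, obtaining an $u$-domain of $[0, b_\mathrm N]$ for the lower half and $[0, a_\mathrm N]$ for the upper half. Each series term then becomes an upper-incomplete Gamma function $\Gamma(i+3/2,\cdot)$ evaluated at $0$, $b_\mathrm N$, and $a_\mathrm N$, which by the very definition of $F_\mathrm N$ aggregates to $2F_\mathrm N(0) - F_\mathrm N(b_\mathrm N) - F_\mathrm N(a_\mathrm N)$.

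The main obstacle is this last step: handling the series for $\mathrm{erf}$ carefully across the two halves of the interval, and tracking the parity-induced sign cancellations so that the even powers of $t$ generate the correct incomplete-Gamma arguments with the $i+3/2$ shift. Once this bookkeeping is in place, only routine algebraic simplification using the definitions of $A_\mathrm N$, $B_\mathrm N$, $D_\mathrm N$, $a_\mathrm N$, $b_\mathrm N$, $\beta_\mathrm N$, and the common prefactor $C(\mathcal{V}_\mathrm N)$ is required to re-assemble the three contributions into the closed-form expression (16); the remark that the expansion is exact as $n\to\infty$ and accurate for finite $n$ then follows from the known convergence of the series for $\mathrm{erf}$ on bounded arguments.
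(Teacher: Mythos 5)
Your proposal follows essentially the same route as the paper's own proof: the shift $t = z - B_\mathrm N$, the three-way split of the bracketed factor in (13), integrating the odd Gaussian-weighted term to endpoint exponentials, merging the $\exp(-\beta_\mathrm N\alpha^2)$ piece with the outer Gaussian to produce the $\mathrm{erf}(D_\mathrm N\cdot)$ term, and series-expanding the $\alpha\,\mathrm{erf}(\sqrt{\beta_\mathrm N}\alpha)$ piece so that the even-power integrals, split at $t=0$, yield the incomplete-Gamma combination $2F_\mathrm N(0)-F_\mathrm N(b_\mathrm N)-F_\mathrm N(a_\mathrm N)$. The steps and bookkeeping you describe coincide with the paper's derivation, so the proposal is correct and not a genuinely different argument.
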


\subsubsection{LOS for both RSU-user and RSU-IRS links}

In this case, $|{\mathrm h_{\mathrm{UI},i}}||{\mathrm h_{\mathrm{IR},i}}|$ is the product of two Rician variables and we denote the mean and variance of $G$ as ${\cal E}_\mathrm L$ and ${\cal V}_\mathrm L$, respectively, which can be expressed as
\begin{subequations}
\begin{equation}
{\cal E}_\mathrm L = M\sigma _\text m^2\frac{\pi }{2}{\text e^{ - K}}{\left[ {(1 + K){I_0}\left( {\frac{K}{2}} \right) + K{I_1}\left( {\frac{K}{2}} \right)} \right]^2},
\end{equation}
\begin{equation}
{\cal V}_\mathrm L =M\left[{ \left({2\sigma _\mathrm m^2 +{A^2}}\right)^2 - {\cal E}_\mathrm L^2}\right].
\end{equation}
\end{subequations}
In this case $|H|$ is a Rician variable and it is very difficult to derive the PDF of $Z'$. Therefore, we convert the problem into a double integration based on the PDF of $|H|$ and $G$.

In the derivation, the Taylor expansion of Bessel function is used, which is given by
\begin{equation}
{I_0}\left( x \right) \approx \sum\limits_{j = 0}^n  {\frac{1}{{{{\left( {j!} \right)}^2}}}{{\left( x \right)}^{2j}}}.
\end{equation}

With (19), we can deal with the distribution of $|H|$ and solve the double integration problem. The outage probability is given in Proposition 2.
\begin{prop}
When the RSU-user link and the RSU-IRS link are both LOS, the outage probability can be expressed as
\begin{equation}
  \begin{array}{l}
  {\mathrm P_{\mathrm {O,LL}}}(Z){\rm{ = }} \sum\limits_{i = 0}^n  {\sum\limits_{j = 0}^n  {{C_{\mathrm L,ij}}} }  \times \\
  \left\{ {{{E_\mathrm L}^{n_{ij}}}\frac{1}{2}\sqrt {\frac{\pi }{\mu_\mathrm L }} \left[ {\mathrm{erf}\left( {  \sqrt {\mu_\mathrm L}  M{{\cal E}_\mathrm L}} \right) - }\right.}\right.\mathrm{erf}\left.{\left( {\sqrt {\mu_\mathrm L}  E_\mathrm L} \right)} \right]  + \\
 \left.{\sum\limits_{k = 1}^{n_{ij}} {\mathop {\rm C}\nolimits_{n_{ij}}^k \!{{E_\mathrm L}^{n_{ij} - k}}} \left[{ \Gamma_k\left({{\mu_\mathrm L E}_\mathrm L^2}\right)\!+\!{{\left( { - 1} \right)}^k} \Gamma_k\left( {\mu_\mathrm L {{\left( {M{{\cal E}_\mathrm L}} \right)}^2}} \right) }\right]}\right\},
\end{array}
\end{equation}
where $n_{ij}=2i+2j+2$, $\mu_\mathrm L {\rm{ = }}({{2M{{\cal V}_\mathrm L}}})^{-1}$, $E_\mathrm L=M{{\cal E}_\mathrm L} - Z/L_\mathrm R$, $\Gamma_k(x)=({2{\mu_\mathrm L ^{{\gamma _k}}}})^{-1}\left({\Gamma \left( {{\gamma _k},0} \right)-\Gamma \left( {{\gamma _k},x} \right)}\right)$ and ${\gamma _k}{\rm{ = }}\frac{{k + 1}}{2}$.
${C_{\mathrm L,ij}}$ is given by (15) at the bottom of this page.
\end{prop}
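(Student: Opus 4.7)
The plan is to express $\mathrm{P}_\mathrm{O}(Z)=\mathrm{P}(L_\mathrm{D}|H|+L_\mathrm{R}G<Z)$ as an iterated integral that exploits the independence of $|H|$ and $G$, with $G\sim\mathcal{N}(M{\cal E}_\mathrm{L},M{\cal V}_\mathrm{L})$ by the CLT argument already invoked in the excerpt. This yields
\begin{equation*}
\mathrm{P}_\mathrm{O}(Z)=\int_{0}^{\infty}p_\mathrm{ri}(h)\,F_{G}\!\left(\frac{Z-L_\mathrm{D}h}{L_\mathrm{R}}\right)\mathrm{d}h,
\end{equation*}
with $p_\mathrm{ri}$ the Rician density in (4b) and $F_{G}$ the Gaussian CDF written through $\mathrm{erf}$. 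Because $p_\mathrm{ri}$ contains the modified Bessel function $I_{0}$ and $F_{G}$ contains the error function, the integral admits no closed form, which is precisely why the proposition must rely on the two series expansions (19) and (14).

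I would first substitute (19) into $p_\mathrm{ri}(h)$, turning it into a finite sum indexed by $j\in\{0,\ldots,n\}$ of monomial-Gaussian terms proportional to $h^{2j+1}\exp(-h^{2}/(2\sigma_\mathrm{m}^{2}))$ with explicit constants in $A$ and $\sigma_\mathrm{m}$. Next, after rewriting $F_{G}((Z-L_\mathrm{D}h)/L_\mathrm{R})=\tfrac{1}{2}[1-\mathrm{erf}(\sqrt{\mu_\mathrm{L}}(L_\mathrm{D}h/L_\mathrm{R}+E_\mathrm{L}))]$ using the definitions of $\mu_\mathrm{L}$ and $E_\mathrm{L}$, I would expand the $\mathrm{erf}$ by (14), producing a second finite sum indexed by $i\in\{0,\ldots,n\}$ of $(L_\mathrm{D}h/L_\mathrm{R}+E_\mathrm{L})^{2i+1}$ terms. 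The resulting double series over $(i,j)$ is the origin of the outer $\sum_{i}\sum_{j}$ in (20), and rescaling $h\mapsto v=L_\mathrm{D}h/L_\mathrm{R}$ (with Jacobian $L_\mathrm{R}/L_\mathrm{D}$) is what produces the factor $(L_\mathrm{R}/L_\mathrm{D})^{2i+2j+2}$ appearing in $C_{\mathrm{L},ij}$ of (15); this substitution converts the product $h^{2j+1}(L_\mathrm{D}h/L_\mathrm{R}+E_\mathrm{L})^{2i+1}$ into a polynomial in $v+E_\mathrm{L}$ of combined degree $n_{ij}=2i+2j+2$.

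With the substitution in place, the remaining $v$-integrals take the form $\int v^{k}\exp(-\mu_\mathrm{L}v^{2})\,\mathrm{d}v$, obtained from the binomial expansion $(v+E_\mathrm{L})^{n_{ij}}=\sum_{k=0}^{n_{ij}}\binom{n_{ij}}{k}v^{k}E_\mathrm{L}^{n_{ij}-k}$. The $k=0$ contribution is $\int\exp(-\mu_\mathrm{L}v^{2})\,\mathrm{d}v$, which between the endpoints $E_\mathrm{L}$ and $M{\cal E}_\mathrm{L}$ produces exactly the $\mathrm{erf}$ difference on the first line of (20). For $k\geq 1$, the substitution $s=\mu_\mathrm{L}v^{2}$ converts $v^{k}\exp(-\mu_\mathrm{L}v^{2})\,\mathrm{d}v$ into $s^{\gamma_{k}-1}e^{-s}\,\mathrm{d}s$ with $\gamma_{k}=(k+1)/2$, so each such integral is an incomplete gamma $\Gamma(\gamma_{k},\cdot)$; splitting the $v$-range at $0$ and applying $v\mapsto -v$ on the negative piece introduces a sign $(-1)^{k}$, which accounts for the symmetrized combination $\Gamma_{k}(\mu_\mathrm{L}E_\mathrm{L}^{2})+(-1)^{k}\Gamma_{k}(\mu_\mathrm{L}(M{\cal E}_\mathrm{L})^{2})$ in (20) and for the very definition of $\Gamma_{k}$ in the statement.

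The hardest part is the simultaneous coordination of (i) the two Taylor expansions, (ii) the rescaling $v=L_\mathrm{D}h/L_\mathrm{R}$ that must both linearize the erf argument and normalize the Gaussian-type exponent to $\exp(-\mu_\mathrm{L}v^{2})$, and (iii) the binomial expansion that collapses $h^{2j+1}(L_\mathrm{D}h/L_\mathrm{R}+E_\mathrm{L})^{2i+1}$ into a single polynomial of degree $n_{ij}$ so that the incomplete-gamma representation applies uniformly in $k$. Once these three pieces are aligned, the remaining work is the lengthy but routine algebraic collection of the Gaussian normalization $1/\sqrt{2\pi M{\cal V}_\mathrm{L}}$, the Rician prefactor $\exp(-A^{2}/(2\sigma_\mathrm{m}^{2}))/\sigma_\mathrm{m}^{2}$, and the coefficients from (14) and (19) into $C_{\mathrm{L},ij}$ as given in (15).
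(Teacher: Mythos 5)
Your setup (independence of $|H|$ and $G$, $G\sim{\cal N}(M{\cal E}_\mathrm L,M{\cal V}_\mathrm L)$ by the CLT, an iterated integral) is sound, but the route you chose does not lead to (20), and the step you yourself flag as the hardest is in fact broken. First, the claimed normalization is false: under $v=L_\mathrm D h/L_\mathrm R$ the Rician exponential $\exp\left(-h^2/(2\sigma_\mathrm m^2)\right)$ becomes $\exp\left(-L_\mathrm R^2 v^2/(2\sigma_\mathrm m^2 L_\mathrm D^2)\right)$, whose rate has nothing to do with $\mu_\mathrm L=1/(2M{\cal V}_\mathrm L)$; no linear rescaling can make the Rician scale coincide with the Gaussian one. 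Second, after your two expansions the $v$-integrand is (up to constants) $v^{2j+1}(v+E_\mathrm L)^{2i+1}$, not the pure power $(v+E_\mathrm L)^{n_{ij}}$, so the single binomial identity you invoke does not apply. Third, because your outer variable is the Rician $h$ integrated over $[0,\infty)$, the surviving integrals would be complete gamma functions; the terms $\mathrm{erf}\left(\sqrt{\mu_\mathrm L}\,M{\cal E}_\mathrm L\right)-\mathrm{erf}\left(\sqrt{\mu_\mathrm L}\,E_\mathrm L\right)$ and $\Gamma_k\left(\mu_\mathrm L E_\mathrm L^2\right)$, $\Gamma_k\left(\mu_\mathrm L (M{\cal E}_\mathrm L)^2\right)$ in (20) arise precisely from integrating the Gaussian density of $G$ over the finite range $[0,Z/L_\mathrm R]$ after centering at $M{\cal E}_\mathrm L$, which your ordering never produces. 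Moreover, the erf series (14) would inject factors $1/(2i+1)$ and powers of $\mu_\mathrm L$ into the coefficients, none of which appear in $C_{\mathrm L,ij}$ of (15).

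For comparison, the paper keeps $G$ as the outer variable with limits $[0,Z/L_\mathrm R]$ and applies both expansions to the Rician density of $|H|$: the Taylor series of $\exp\left(-h^2/(2\sigma_\mathrm m^2)\right)$ (index $i$, giving $(-1)^i/i!$) together with the $I_0$ expansion (19) (index $j$, giving $A^{2j}/(j!)^2$). The inner integral is then of the monomial $h^{2i+2j+1}$ over $\left[0,(Z-L_\mathrm R g)/L_\mathrm D\right]$, producing the single power $\left((Z-L_\mathrm R g)/L_\mathrm D\right)^{n_{ij}}/n_{ij}$ — this is the origin of the factors $1/(2i+2j+2)$ and $(L_\mathrm R/L_\mathrm D)^{n_{ij}}$ in (15) — after which the binomial expansion of $(g-Z/L_\mathrm R)^{n_{ij}}$ about $g-M{\cal E}_\mathrm L$ and the finite-range Gaussian integration yield exactly the erf and incomplete-gamma terms of (20). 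To salvage your ordering you would have to keep the correct exponential rate and carry through a different (and not obviously equivalent) series; as written, the proposal has a genuine gap.
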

\begin{proof}
We derive the outage probability as follows.
\begin{equation}
  \begin{array}{l}
\mathrm P\left[ {{L_\mathrm D}|H| + {L_\mathrm R}G < Z} \right] = \int_0^{\frac{Z}{{{L_\mathrm R}}}} {{f_{_G}}\left( g \right)\mathrm dg\int_0^{\frac{{Z - {L_\mathrm R}g}}{{{L_\mathrm D}}}} {{f_{_{\left| H \right|}}}\left( h \right)\mathrm dh} } \\
{\rm{ = }}\int_0^{\frac{Z}{{{L_\mathrm R}}}} \frac{1}{{\sqrt {2\pi } \sqrt {M{{\cal V}_\mathrm L}} }}{\rm{exp}}\left( { - \frac{{{{\left( {g - M{{\cal E}_\mathrm L}} \right)}^2}}}{{2M{{\cal V}_\mathrm L}}}} \right)\mathrm dg\\
\int_0^{\frac{{Z - {L_\mathrm R}g}}{{{L_\mathrm D}}}} {\frac{h}{{\sigma _\mathrm m^2}}\mathrm{exp}\left( { - \frac{{{A^2} + {h^2}}}{{2\sigma _\mathrm m^2}}} \right){I_0}\left( {\frac{{Ah}}{{\sigma _\mathrm m^2}}} \right)\mathrm dh}  \\
 = \frac{1}{{\sqrt {2\pi M{{\cal V}_\mathrm L}} }}\frac{1}{{\sigma _\mathrm m^2}}\mathrm{exp}\left( { - \frac{{{A^2}}}{{2\sigma _\mathrm m^2}}} \right)\int_0^{\frac{Z}{{{L_\mathrm R}}}} {{\rm{exp}}\left( { - \frac{{{{\left( {g - M{{\cal E}_\mathrm L}} \right)}^2}}}{{2M{{\cal V}_\mathrm L}}}} \right)\mathrm dg} \\
 \int_0^{\frac{{Z - {L_\mathrm R}g}}{{{L_\mathrm D}}}} {h \cdot \sum\limits_{i = 0}^\infty  {\frac{{{{\left( { - 1} \right)}^i}}}{{i!}}{{\left( {\frac{{{h^2}}}{{2\sigma _\mathrm m^2}}} \right)}^i}}  \sum\limits_{j = 0}^\infty  {\frac{1}{{{{\left( {j!} \right)}^2}}}{{\left( {\frac{{Ah}}{{2\sigma _\mathrm m^2}}} \right)}^{2j}}} \mathrm dh} \\
{\rm{ = }}\frac{1}{{\sqrt {2\pi M{{\cal V}_\mathrm L}} }}\frac{1}{{\sigma _\mathrm m^2}}\mathrm{exp}\left( { - \frac{{{A^2}}}{{2\sigma _\mathrm m^2}}} \right)\int_0^{\frac{Z}{{{L_\mathrm R}}}} {{\rm{exp}}\left( { - \frac{{{{\left( {g - M{{\cal E}_\mathrm L}} \right)}^2}}}{{2M{{\cal V}_\mathrm L}}}} \right)\mathrm dg} \\
\int_0^{\frac{{Z - {L_\mathrm R}g}}{{{L_\mathrm D}}}} {\sum\limits_{i = 0}^\infty  {\sum\limits_{j = 0}^\infty  {\frac{1}{{{2^{i + 2j}}\sigma _\mathrm m^{2i + 4j}}}\frac{{{{\left( { - 1} \right)}^i}{A^{2j}}}}{{i! {{\left( {j!} \right)}^2}}}} } {h^{2i + 2j + 1}}} dh\\
 = \sum\limits_{i = 0}^\infty  {\sum\limits_{j = 0}^\infty  {{C_{\mathrm L,ij}}} }
 \int_0^{\frac{Z}{{{L_\mathrm R}}}} {{\rm{exp}}\left( { - \frac{{{{\left( {g - M{{\cal E}_\mathrm L}} \right)}^2}}}{{2M{{\cal V}_\mathrm L}}}} \right){{\left( {g - \frac{Z}{{{L_\mathrm R}}}} \right)}^{2i + 2j + 2}}\mathrm dg}\\
{\rm{ = }}\sum\limits_{i = 0}^\infty  {\sum\limits_{j = 0}^\infty  {{C_{\mathrm L,ij}}} }  \int_0^{\frac{Z}{{{L_\mathrm R}}}} {{\rm{exp}}\left( { - \frac{{{{\left( {g - M{{\cal E}_\mathrm L}} \right)}^2}}}{{2M{{\cal V}_\mathrm L}}}} \right)}\times\\
{{\sum\limits_{k = 0}^{n_{ij}} {\mathop {\rm C}\nolimits_{n_{ij}}^k {{\left( {g - M{{\cal E}_\mathrm L}} \right)}^k}\left( {M{{\cal E}_\mathrm L} - \frac{Z}{{{L_\mathrm R}}}} \right)} }^{n_{ij} - k}}\mathrm dg \\
 = \sum\limits_{i = 0}^\infty  {\sum\limits_{j = 0}^\infty  {{C_{ij}}} }  \left[ {{{E_\mathrm L}^{n_{ij}}}\frac{1}{2}\sqrt {\frac{\pi }{\mu_{\mathrm L} }}\mathrm{erf}\left( {\sqrt {\mu_{\mathrm L}}  t} \right) - }\right.\\
 \left.{\left.{\sum\limits_{k = 1}^{n_{ij}} {\mathop {\rm C}\nolimits_{n_{ij}}^k {{{E_\mathrm L}}^{n_{ij} - k}}\frac{{\Gamma \left( {{\gamma _k},\mu_{\mathrm L} {t^2}} \right)}}{{2{\mu_{\mathrm L} ^{{\gamma _k}}}}}} } \right]} \right|_{ - M{{\cal E}_\mathrm L}}^{-\mu_{\mathrm L}}.
\end{array}
\end{equation}

Then, we can obtain (21).
\end{proof}

\subsubsection{LOS for the RSU-user link and NLOS for the RSU-IRS link}
In this case, $|H|$ is a Rician distributed variable and $|{\mathrm h_{\mathrm{UI},i}}||{\mathrm h_{\mathrm{IR},i}}|$ is the product of a Rician distributed variable and a Rayleigh distributed variable. The derivation process is similar to the second case, while the mean and variance of $G$ are ${\cal E}_\mathrm N$ and ${\cal V}_\mathrm N$, respectively.

\begin{prop}
When the RSU-user link is LOS and the RSU-IRS link is NLOS, the outage probability can be expressed as
\begin{equation}
  \begin{array}{l}
  {\mathrm P_\mathrm {O,LN}}(Z){\rm{ = }} \sum\limits_{i = 0}^n  {\sum\limits_{j = 0}^n  {{C_{\mathrm N,ij}}} }  \times \\
  \left\{ {{{E_\mathrm N}^{n{ij}}}\frac{1}{2}\sqrt {\frac{\pi }{\mu_\mathrm N }} \left[ {\mathrm{erf}\left( {  \sqrt {\mu_\mathrm N}  M{{\cal E}_\mathrm N}} \right) - }\right.}\right.\mathrm{erf}\left.{\left( {\sqrt {\mu_\mathrm N}  E_\mathrm N} \right)} \right]  + \\
 \left.{\sum\limits_{k = 1}^{n_{ij}} {\mathop {\rm C}\nolimits_{n_{ij}}^k \!{{E_\mathrm N}^{n_{ij}\! - \!k}}} \left[{ \Gamma_k\!\left({{\mu_\mathrm N E}_\mathrm N^2}\right)\!+\!{{\left( { \!- 1} \right)}^k} \Gamma_k\left( {\mu_\mathrm N {{\left( {M{{\cal E}_\mathrm N}} \right)}^2}} \right) }\right]}\right\},
\end{array}
\end{equation}
where ${C_{\mathrm N,ij}}\!=\!C_{\mathrm L,ij}\sqrt{{{\cal V}_\mathrm L}/{{\cal V}_\mathrm N}}$, \ $E_\mathrm N=M{{\cal E}_\mathrm N} \!-\! Z/L_\mathrm R$, \\$\mu_\mathrm N {\rm{ = }}({{2M{{\cal V}_\mathrm N}}})^{-1}$ and $\Gamma_k(x)\!=\!(2{\mu_\mathrm N ^{{\gamma _k}}})^{-1}\!\left({\Gamma \left( {{\gamma _k},0} \right)\!-\!\Gamma \left( {{\gamma _k},x} \right)}\right)$.
\end{prop}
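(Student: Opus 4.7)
The plan is to mirror the derivation used in Proposition 2, since the structural situation is identical: the RSU-user channel $|H|$ is still Rician and the sum $G$ is still approximated by a Gaussian via the CLT. The only change is that the RSU-IRS link is now NLOS, so each $|h_{\mathrm{IR},i}|$ is Rayleigh instead of Rician, which replaces the mean/variance pair $(\mathcal{E}_\mathrm{L},\mathcal{V}_\mathrm{L})$ by $(\mathcal{E}_\mathrm{N},\mathcal{V}_\mathrm{N})$ from (12). Thus I would write $\mathrm{P}_{\mathrm{O,LN}}(Z)=\mathrm{P}[L_\mathrm{D}|H|+L_\mathrm{R}G<Z]$ as a double integral exactly as in the first line of (21), but with $G\sim\mathcal{N}(M\mathcal{E}_\mathrm{N},M\mathcal{V}_\mathrm{N})$ and $f_{|H|}$ equal to the Rician PDF in (4b).

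The key steps are then carried out in the same order as in the proof of Proposition 2. First, I would expand the Bessel term $I_0(Ah/\sigma_\mathrm{m}^2)$ inside $f_{|H|}$ using the Taylor series (19), and expand $\exp(-h^2/2\sigma_\mathrm{m}^2)$ as a power series in $h^2$. This turns the inner $h$-integral over $[0,(Z-L_\mathrm{R}g)/L_\mathrm{D}]$ into a sum of monomial integrals, producing a factor $\left((Z-L_\mathrm{R}g)/L_\mathrm{D}\right)^{2i+2j+2}/(2i+2j+2)$. Collecting constants, this yields the coefficient $C_{\mathrm{N},ij}$, which is identical to $C_{\mathrm{L},ij}$ in (15) except that the normalization $1/\sqrt{2\pi M\mathcal{V}_\mathrm{L}}$ is replaced by $1/\sqrt{2\pi M\mathcal{V}_\mathrm{N}}$; hence $C_{\mathrm{N},ij}=C_{\mathrm{L},ij}\sqrt{\mathcal{V}_\mathrm{L}/\mathcal{V}_\mathrm{N}}$ as claimed. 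Next, I would substitute $t=g-M\mathcal{E}_\mathrm{N}$ and expand $(g-Z/L_\mathrm{R})^{n_{ij}}=((t)+(M\mathcal{E}_\mathrm{N}-Z/L_\mathrm{R}))^{n_{ij}}$ via the binomial theorem, giving a sum indexed by $k=0,\dots,n_{ij}$ with the constant factor $E_\mathrm{N}^{n_{ij}-k}$.

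Finally, for each $k$ the remaining integral $\int \exp(-\mu_\mathrm{N} t^2)t^{k}\,dt$ evaluated between $t=-M\mathcal{E}_\mathrm{N}$ and $t=-\mu_\mathrm{N}$ (the shifted limits) reduces to either an error function (when $k=0$) or, after the change of variable $u=\mu_\mathrm{N}t^2$, to a difference of upper incomplete gamma functions, which is precisely the quantity encoded by $\Gamma_k(x)=(2\mu_\mathrm{N}^{\gamma_k})^{-1}\bigl(\Gamma(\gamma_k,0)-\Gamma(\gamma_k,x)\bigr)$ with $\gamma_k=(k+1)/2$. The two endpoints of the outer integral contribute the terms $\Gamma_k(\mu_\mathrm{N}E_\mathrm{N}^2)$ and $(-1)^k\Gamma_k(\mu_\mathrm{N}(M\mathcal{E}_\mathrm{N})^2)$ in (23), where the $(-1)^k$ arises because the substitution $u=\mu_\mathrm{N}t^2$ loses the sign of $t$ and must be reinstated according to the parity of $k$.

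The main obstacle is clerical rather than conceptual: the derivation is routine once one recognizes the parallel with Proposition 2, but one must be careful about the sign bookkeeping in the binomial expansion and the change of variable $u=\mu_\mathrm{N}t^2$, especially for the negative lower limit $t=-M\mathcal{E}_\mathrm{N}$, to obtain the correct $(-1)^k$ in the incomplete-gamma contribution. A secondary subtlety is that the CLT approximation technically allows $g<0$, while the original support of $G$ is $g\ge 0$; extending the lower limit of the $g$-integral to $-\infty$ (as implicit in the Gaussian substitution $t=g-M\mathcal{E}_\mathrm{N}$) is justified because $M\mathcal{E}_\mathrm{N}\gg\sqrt{M\mathcal{V}_\mathrm{N}}$ for the parameter ranges of interest, and this is consistent with the approximations already made in Propositions 1 and 2.
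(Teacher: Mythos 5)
Your proposal matches the paper's own treatment: the paper justifies Proposition 3 only by remarking that the derivation is that of Proposition 2 with $|H|$ still Rician and the moments of $G$ replaced by $({\cal E}_\mathrm{N},{\cal V}_\mathrm{N})$, which is exactly what you carry out, including the key observation $C_{\mathrm N,ij}=C_{\mathrm L,ij}\sqrt{{\cal V}_\mathrm{L}/{\cal V}_\mathrm{N}}$ since only the Gaussian normalization of $G$ changes. The only blemish is the slip of writing the shifted upper limit as $t=-\mu_\mathrm{N}$ instead of $t=-E_\mathrm{N}$ (a typo the paper itself makes in its equation (21)); your stated endpoint contributions are consistent with the correct limit, so the argument stands.
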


\subsubsection{NLOS for RSU-user link and LOS for RSU-IRS link}
In this case $|H|$ is a Rayleigh distributed variable and $|{\mathrm h_{\mathrm{UI},i}}||{\mathrm h_{\mathrm{IR},i}}|$ is the product of two Rician distributed variables. The derivation process is similar to the first case, while the mean and variance of $G$ are ${\cal E}_\mathrm L$ and ${\cal V}_\mathrm L$, respectively.

\begin{prop}
When the RSU-user link is NLOS and the RSU-IRS link is LOS, the outage probability can be expressed as
\begin{equation}
  \begin{array}{l}
{\mathrm P_{\mathrm O,\text{NL}}}(Z){\rm{ = }} C({\cal V}_\mathrm L) \times \\
\left\{ {\frac{1}{{4\beta_\mathrm L }}\frac{\sqrt{\pi}}{D_\mathrm L} \left[ {{\rm{erf}}\left( {D_\mathrm L \left( {Z - B_\mathrm L} \right)} \right) + }\right.}\right. {\left.{{\rm{erf}}\left( {D_\mathrm L B_\mathrm L} \right)} \right]} - \\
 \left.{\frac{1}{2}\sqrt {\frac{\pi }{\beta_\mathrm L }} {L_\mathrm D}\sigma _0^2\left( {\text e ^{ -a_\mathrm L} \!-\! \text e ^{ -b_\mathrm L}} \right) + 2F_\text L(0)\!  -\! F_\text L(b_\mathrm L) \! - \! F_\text L(a_\mathrm L)}\right\},
\end{array}
\end{equation}
where $\beta_\mathrm L\!\!=\!\!\left({{{M{{\cal V}_\mathrm L}L_\mathrm R^2 \!+\! L_\mathrm D^2\sigma _0^2}}}\right)\!/\!\left({{{2M{{\cal V}_\mathrm L}L_\mathrm R^2\sigma _0^2}}}\right)$,$b_\mathrm L\!\!=\!\!B_\mathrm L^2/(2A_\mathrm L)$\\$D_\mathrm L=\sqrt {({{A_\mathrm L  + 2\beta_\mathrm L L_\mathrm D^2\sigma _0^4}})/({{2{A_\mathrm L ^2}}})}$, $a_\mathrm L\!=\!\left( {Z\! -\! {B_\mathrm L}} \right)^2/(2A_\mathrm L)$,\\ $B_\mathrm L\!=\!{L_\mathrm R}M{{\cal E}_\mathrm L}$, with $A_\mathrm L\!=\!M{{\cal V}_\mathrm L}L_\mathrm R^2\! +\! L_\mathrm D^2\sigma _0^2$. $F_\text L(x)$is given by (16) at the bottom of this page.
\end{prop}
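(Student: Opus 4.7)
The plan is to mirror the derivation of Proposition 1, exploiting the structural parallel that the authors highlight: in the NL case $|H|$ is Rayleigh (as in the NN case), so its PDF is $(x/\sigma_0^2)\exp(-x^2/(2\sigma_0^2))$, while the sum $G = \sum_{i=1}^M |h_{\mathrm{UI},i}||h_{\mathrm{IR},i}|$ is approximated by CLT with a Gaussian distribution whose mean and variance are now $M\mathcal{E}_\mathrm L$ and $M\mathcal{V}_\mathrm L$, because each $\zeta_i$ is a product of two Rician variables instead of a Rayleigh--Rician product. Crucially, this difference enters the calculation only through the two moments of $G$, so the functional form of the PDF and of the subsequent integral is unchanged.

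First, I would derive the PDF $f_\mathrm{NL}(z)$ of $Z' = L_\mathrm D|H| + L_\mathrm R G$ by convolving the Rayleigh PDF of $L_\mathrm D|H|$ with the Gaussian approximation of $L_\mathrm R G$. Completing the square in the exponent produces exactly the functional form of (13), with the substitutions $\mathcal{V}_\mathrm N\!\to\!\mathcal{V}_\mathrm L$ and $\mathcal{E}_\mathrm N\!\to\!\mathcal{E}_\mathrm L$ carried out throughout. This yields the parameters $\beta_\mathrm L$, $B_\mathrm L = L_\mathrm R M\mathcal{E}_\mathrm L$, $A_\mathrm L = M\mathcal{V}_\mathrm L L_\mathrm R^2 + L_\mathrm D^2\sigma_0^2$, and $D_\mathrm L$ as declared in the statement, while $C(\mathcal{V}_\mathrm L)$ is obtained by the same normalization $C(x)=(\sqrt{2\pi M x}\,L_\mathrm R\sigma_0^2)^{-1}$ already defined in Proposition 1.

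Second, I would integrate this PDF from $0$ to $Z$ to obtain $\mathrm{P}_\mathrm{O,NL}(Z)$. Performing the substitution $t = z - B_\mathrm L$, the integrand splits into three pieces: a pure Gaussian piece that produces the combination $\mathrm{erf}(D_\mathrm L(Z - B_\mathrm L)) + \mathrm{erf}(D_\mathrm L B_\mathrm L)$; a linear-times-Gaussian piece $t\exp(-t^2/(2A_\mathrm L))$ whose antiderivative is again Gaussian and evaluates to the difference $e^{-a_\mathrm L} - e^{-b_\mathrm L}$; and the series expansion (14) of $\mathrm{erf}$ applied to the remaining $t$-polynomial piece, whose termwise integration of $t^{2i+2}\exp(-t^2/(2A_\mathrm L))$ produces upper incomplete gamma functions that assemble into $F_\mathrm L$ via the formula (16) already displayed.

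The main obstacle, as in Proposition 1, is the bookkeeping of signs and limits in the final assembly. The substitution maps $[0,Z]$ to $[-B_\mathrm L,\,Z - B_\mathrm L]$, and because the polynomial factor $t^{2i+2}$ is even while the surviving parts of the integrand are also even in $t$, the interval must be split at $t=0$ so that $F_\mathrm L$ is expressed in the non-negative arguments $b_\mathrm L = B_\mathrm L^2/(2A_\mathrm L)$ and $a_\mathrm L = (Z - B_\mathrm L)^2/(2A_\mathrm L)$. The two sub-intervals then contribute $2F_\mathrm L(0) - F_\mathrm L(b_\mathrm L) - F_\mathrm L(a_\mathrm L)$, reproducing the structure of (16). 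No genuinely new analytical technique is needed, since the only physical change---the RSU-IRS link being LOS rather than NLOS---has already been absorbed entirely into the two moments $\mathcal{E}_\mathrm L$ and $\mathcal{V}_\mathrm L$ through the CLT-based Gaussian approximation of $G$.
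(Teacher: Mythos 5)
Your proposal is correct and follows essentially the same route as the paper: the paper treats this case by repeating the Proposition~1 derivation verbatim (Rayleigh $|H|$ convolved with the CLT Gaussian for $G$, substitution $t=z-B$, erf series, termwise incomplete-gamma integration) with the moments of $G$ replaced by ${\cal E}_\mathrm L$ and ${\cal V}_\mathrm L$, which is exactly what you do. Your handling of the split at $t=0$ giving $2F_\mathrm L(0)-F_\mathrm L(b_\mathrm L)-F_\mathrm L(a_\mathrm L)$ matches the paper's bookkeeping in the Proposition~1 proof.
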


Considering the vehicle distribution and summarizing results in (17), (21), (23) and (24), the overall outage probability is given in Theorem 1.
\begin{thm}
The outage probability of the system described in Section II is given by
\begin{equation}
\begin{split}
{\mathrm P_\mathrm O}(Z) \!= &\mathrm P_\mathrm{O,NN}(Z) p_\mathrm{IR,N} p_\mathrm{UR,N} \!+\! \mathrm P_\mathrm{O,LL}(Z) p_\mathrm{IR,L} p_\mathrm{UR,L}\! +\! \\
&\mathrm P_\mathrm{O,LN}(Z) p_\mathrm{IR,N} p_\mathrm{UR,L}\!+\!
\mathrm P_\mathrm{O,NL}(Z) p_\mathrm{IR,L} p_\mathrm{UR,N}
\end{split}
\end{equation}
where $\mathrm P_\mathrm{O,NN}(Z)$, $\mathrm P_\mathrm{O,LL}(Z)$, $\mathrm P_\mathrm{O,LN}(Z)$ and $\mathrm P_\mathrm{O,NL}(Z)$ are given in (17), (21), (23) and (24), respectively.
\end{thm}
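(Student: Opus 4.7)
The plan is to obtain Theorem 1 as a straightforward application of the law of total probability, using the four LOS/NLOS combinations for the RSU-IRS and RSU-user links as a partition of the sample space, together with the blocking probabilities already computed in (2a)--(2b).

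First I would define four disjoint events corresponding to the joint blocking state of the two links: $\mathcal{E}_{\mathrm{NN}}$, $\mathcal{E}_{\mathrm{LL}}$, $\mathcal{E}_{\mathrm{LN}}$, $\mathcal{E}_{\mathrm{NL}}$, where the first subscript refers to the RSU-IRS link and the second to the RSU-user link (or vice versa, matching the notation of the propositions). These four events are mutually exclusive and exhaust all possibilities, so $\sum \Pr(\mathcal{E}_\bullet) = 1$. By conditioning the outage event on the channel state, the law of total probability gives
\begin{equation*}
\mathrm P_\mathrm O(Z) = \!\!\!\sum_{s\in\{\mathrm{NN},\mathrm{LL},\mathrm{LN},\mathrm{NL}\}}\!\!\! \mathrm P_\mathrm O(Z\mid \mathcal{E}_s)\,\Pr(\mathcal{E}_s),
\end{equation*}
and by construction the conditional outage probabilities are exactly $\mathrm P_{\mathrm O,\mathrm{NN}}(Z)$, $\mathrm P_{\mathrm O,\mathrm{LL}}(Z)$, $\mathrm P_{\mathrm O,\mathrm{LN}}(Z)$, $\mathrm P_{\mathrm O,\mathrm{NL}}(Z)$ established in Propositions 1--4.

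Next I would justify that $\Pr(\mathcal{E}_s)$ factorises into the product of the individual link-blocking probabilities. The RSU-IRS link can only be obstructed by vehicles lying in a strip around that link, while the RSU-user link is obstructed only by vehicles in the inner (obstacle) lane near that link. Because the obstacle-lane PPP $\Phi_\mathrm O$ and the user-lane PPP $\Phi_\mathrm U$ are independent, and because the PPP restricted to disjoint regions produces independent point counts, the blocking indicators for the two links are independent. Hence $\Pr(\mathcal{E}_\mathrm{NN}) = p_\mathrm{IR,N}\,p_\mathrm{UR,N}$, and similarly for the other three combinations, using (2a)--(2b).

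Substituting these factorisations into the total-probability expression yields exactly (25). The only step requiring care is the independence claim underlying the factorisation of $\Pr(\mathcal{E}_s)$; in principle the two links could share vehicles in the obstacle lane, but under the PPP-with-strip-blocking model adopted in (2a)--(2b) the authors have already implicitly treated these events as independent, so in this paper I would simply invoke (2a)--(2b) directly rather than re-deriving joint blocking probabilities. The remaining substitution and rearrangement are purely notational, so I do not expect any genuine obstacle beyond recording the conditioning argument cleanly.
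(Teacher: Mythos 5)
Your proposal is correct and follows essentially the same route as the paper, which gives no explicit proof of Theorem 1 beyond combining the four conditional outage probabilities of Propositions 1--4 via the law of total probability, with the joint blocking probabilities factored into the per-link probabilities of (2a)--(2b). Your explicit flagging of the independence assumption between the two links' blocking events (possible shared obstacle-lane vehicles) is a reasonable caveat, but it does not change the argument, which is the one the paper implicitly uses.
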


From Theorem 1, an accurate approximation of the outage probability is provided. The accuracy is related to the series order $n$ and the approximation would be more accurate when $n$ is large. In practice, the approximation performance would be very accurate when $n$ is around 10.

\subsection{Analysis based on CLT}

The expression for outage probability in Theorem 1 is accurate but very complicated. Therefore, we provide a simplified expression based on CLT. The basic idea is to use a Gaussian variable to approximate $Z'$.

Since $Z'$ is the sum of $M+1$ independent, uniform but nonidentical random variables, $Z'$ follows the Gaussian distribution when $M$ is large, according to CLT\cite{07350}.
Denote its mean and variance as $\varepsilon$ and $\nu$, which are expressed as
\begin{subequations}
\begin{equation}
\varepsilon = {L_\text D}{\cal E}_{|H|}+{L_\text R}{\cal E}_{G}
\end{equation}
\begin{equation}
\nu = {L_\text D^2}{\cal{ V}}_{|H|} + L_\text R^2 {\cal{ V}}_G.
\end{equation}
\end{subequations}
where ${\cal E}_{|H|}$, ${\cal{ V}}_{|H|}$, ${\cal E}_{G}$ and ${\cal{ V}}_G$ are the mean and variance of $|H|$ and $G$, respectively. ${\cal E}_{G}$ and ${\cal{ V}}_G$ in different conditions have been provided in (12) and (18), which are expressed as ${\cal E}_\text{N}$, ${\cal{ V}}_\text{N}$ and ${\cal E}_\text{L}$, ${\cal{ V}}_\text{L}$. For $|H|$, the mean and variance when the RSU-user link is LOS or NLOS are given, respectively, by ${\cal E}_{|H|,\mathrm L}= \sqrt {\pi/2} {\sigma _\mathrm m}{\text e^{ - K/2}}\left[ {(1 + K){I_0}\left( {K/2} \right) + K{I_1}\left( {K/2} \right)} \right]$,\\ ${\cal{ V}}_{|H|,\mathrm L}\!\!=\!\!2\sigma _{\rm{m}}^2 \!+\! {A^2}$ and ${\cal E}_{|H|,\mathrm N}\!\!=\!\!\sqrt {\pi\!/2} {\sigma _0}$, ${\cal{ V}}_{|H|,\mathrm N}\!=\!\left( {2{\rm{ \! -\! }}\pi/2} \right)\sigma _0^2$. By integration over $Z'$ and assuming that it is Gaussian distributed, the approximated outage probability is obtained in the following theorem.

\begin{thm}
Based on CLT, a simplified approximated expression of outage probability is given by
\begin{equation}
  \begin{array}{l}
\mathrm {P_O}(Z) = \frac{1}{2}\mathrm{erf}\left( {\sqrt {\frac{1}{{2\nu }}} \left( {Z - \varepsilon } \right)} \right) + \frac{1}{2}\mathrm{erf}\left( {\sqrt {\frac{1}{{2\nu }}} \varepsilon } \right).
\end{array}
\end{equation}
where $Z$ is the modified threshold in (11).
\end{thm}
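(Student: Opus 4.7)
The plan is to apply the CLT approximation already introduced in the preceding paragraph, reduce the outage probability to a single Gaussian integral over a bounded range, and then evaluate that integral via an affine substitution that turns it into a difference of error functions. The fact that $Z'=L_{\mathrm D}|H|+L_{\mathrm R}G$ is the sum of two independent nonnegative random variables with finite mean and variance is what drives the whole argument.

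First, I would justify that $Z'$ is approximately Gaussian with mean $\varepsilon$ and variance $\nu$ as given in (26). The variable $G$ is already established to be approximately $\mathcal{N}(M\mathcal{E}_0,M\mathcal{V}_0)$ by CLT for large $M$. Since $|H|$ is independent of $G$, the sum $L_{\mathrm D}|H|+L_{\mathrm R}G$ is a convolution of a nearly-Gaussian density with the density of $L_{\mathrm D}|H|$; for large $M$ the former dominates and the sum itself is approximately Gaussian with mean and variance obtained by adding those of the two independent summands, which gives exactly (26a)--(26b). Here I would note that the formulas for $\mathcal{E}_{|H|}$ and $\mathcal{V}_{|H|}$ in the Rayleigh and Rician cases are standard and stated right before the theorem, so the values of $\varepsilon,\nu$ are completely specified in each of the four LOS/NLOS configurations (and the outer average over these four cases is the same weighting as in Theorem 1).

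Second, I would write the outage probability as a truncated Gaussian integral. Since $Z'\ge 0$ deterministically (it is a sum of magnitudes), the event $\{Z'<Z\}$ is the same as $\{0\le Z'<Z\}$, so
\begin{equation*}
\mathrm{P_O}(Z)=\int_{0}^{Z}\frac{1}{\sqrt{2\pi\nu}}\exp\!\left(-\frac{(z-\varepsilon)^2}{2\nu}\right)\mathrm dz.
\end{equation*}
Applying the substitution $u=(z-\varepsilon)/\sqrt{2\nu}$ sends the integration limits to $-\varepsilon/\sqrt{2\nu}$ and $(Z-\varepsilon)/\sqrt{2\nu}$ and reduces the integrand to $\pi^{-1/2}e^{-u^2}$. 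Splitting the resulting integral at $u=0$ and using the oddness $\mathrm{erf}(-x)=-\mathrm{erf}(x)$ to flip the lower half, one obtains exactly the right-hand side of (27).

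The one point that requires a moment of care, and which I would flag as the only real obstacle, is the lower limit of integration: naively replacing $\{Z'<Z\}$ with $(-\infty,Z)$ would give $\tfrac12+\tfrac12\mathrm{erf}((Z-\varepsilon)/\sqrt{2\nu})$, which disagrees with the stated expression precisely by the amount $\tfrac12-\tfrac12\mathrm{erf}(\varepsilon/\sqrt{2\nu})$ of Gaussian mass placed on the nonphysical region $z<0$. Integrating from $0$ instead of $-\infty$ is what produces the second $\mathrm{erf}$ term in (27) and is consistent with $Z'\ge 0$; when $\varepsilon\gg\sqrt{\nu}$ this correction is numerically negligible, but keeping it makes the approximation exact on the support of $Z'$ under the Gaussian model. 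With this point addressed, the derivation is a one-line change of variables.
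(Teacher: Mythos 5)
Your derivation is correct and matches the paper's (largely implicit) argument: the paper likewise approximates $Z'$ as Gaussian with the mean $\varepsilon$ and variance $\nu$ in (26) and obtains (27) "by integration over $Z'$," which is exactly your truncated integral over $[0,Z]$ whose lower limit produces the second $\mathrm{erf}$ term. Your remark on why the integration starts at $0$ rather than $-\infty$ correctly identifies the only subtle point, so nothing is missing.
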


\subsection{Discussion}

To gain more insights, we analyze the impact of the vehicle density and the number of IRS elements on the outage probability.
\newtheorem{cor}{Corollary}
\begin{cor}
The outage probability is a monotonically increasing function of $\lambda _\mathrm{O}$ or $\lambda _\mathrm{U}$. With higher vehicle density, the probability of channel blockage increases, leading to the increase of outage probability.
\end{cor}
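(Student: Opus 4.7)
The plan is to prove monotonicity by combining the mixture form in Theorem 1 with a stochastic coupling that bypasses the intricate series expressions of Propositions 1--4. Setting $p := p_\mathrm{UR,L} = \mathrm e^{-\lambda_\mathrm{O}\tau}$ and $q := p_\mathrm{IR,L} = \mathrm e^{-(\lambda_\mathrm{O}+\lambda_\mathrm{U})\tau}$, I would rewrite (25) as
\begin{equation*}
\mathrm P_\mathrm O(Z) = \mathrm P_\mathrm{O,NN}(1-p)(1-q) + \mathrm P_\mathrm{O,LL}\,pq + \mathrm P_\mathrm{O,LN}\,p(1-q) + \mathrm P_\mathrm{O,NL}(1-p)\,q.
\end{equation*}
Note that $p$ depends on $\lambda_\mathrm{O}$ only while $q$ depends on both densities, with derivatives $\partial_{\lambda_\mathrm{O}}p = -\tau p$, $\partial_{\lambda_\mathrm{O}}q = -\tau q$, and $\partial_{\lambda_\mathrm{U}}q = -\tau q$. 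Monotonicity in $\lambda_\mathrm{O}$ and $\lambda_\mathrm{U}$ will therefore follow once $\mathrm P_\mathrm O(Z)$ is shown to be nonincreasing in both $p$ and $q$.

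The crucial step is the pointwise dominance $\mathrm P_\mathrm{O,LL}(Z) \le \mathrm P_\mathrm{O,LN}(Z)$, $\mathrm P_\mathrm{O,NL}(Z) \le \mathrm P_\mathrm{O,NN}(Z)$, and likewise $\mathrm P_\mathrm{O,LL}(Z)\le\mathrm P_\mathrm{O,NL}(Z)$, $\mathrm P_\mathrm{O,LN}(Z)\le\mathrm P_\mathrm{O,NN}(Z)$. Rather than comparing the closed-form series from the Propositions, I would work directly with the pre-approximation definition $\mathrm P_\mathrm O(Z) = \mathrm P(L_\mathrm D|H| + L_\mathrm R G < Z)$ in (11), reading each conditional quantity as this probability under the corresponding joint link-state conditioning. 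Switching a given link from LOS to NLOS (i) shrinks the path-loss factor $L_\mathrm D$ or $L_\mathrm R$, because $\alpha_\mathrm N > \alpha_\mathrm L$ in (3), and (ii) replaces a Rician channel magnitude by a Rayleigh one with matched second moment, which is stochastically smaller whenever the Rician $K$-factor is positive. A joint coupling then makes $L_\mathrm D|H| + L_\mathrm R G$ pointwise smaller under the NLOS scenario, rendering the outage event $\{L_\mathrm D|H| + L_\mathrm R G < Z\}$ more likely and giving the four dominances at every $Z$.

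Substituting these dominances into the partial derivatives of the rewritten $\mathrm P_\mathrm O(Z)$ yields
\begin{equation*}
\partial_p \mathrm P_\mathrm O(Z) = -(1-q)(\mathrm P_\mathrm{O,NN} - \mathrm P_\mathrm{O,LN}) - q(\mathrm P_\mathrm{O,NL} - \mathrm P_\mathrm{O,LL}) \le 0,
\end{equation*}
and the symmetric $\partial_q \mathrm P_\mathrm O(Z)\le 0$, since $p,q\in[0,1]$ and each bracketed difference is nonnegative. The chain rule then delivers $\partial_{\lambda_\mathrm{O}} \mathrm P_\mathrm O \ge 0$ and $\partial_{\lambda_\mathrm{U}} \mathrm P_\mathrm O \ge 0$, which is precisely the monotonicity claimed; the intuitive reading is that raising either density shifts mixture weight from LOS combinations to the stochastically worse NLOS combinations. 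The main obstacle is the dominance step, since the four conditional probabilities arise from genuinely different distributional families (Rayleigh, Rician, and products thereof) and a term-by-term comparison of the series in Propositions 1--4 would be unwieldy; the coupling argument above is the natural workaround, provided that the Rayleigh parameter $\sigma_0$ and the Rician parameters $\sigma_\mathrm m,A$ are matched consistently with the underlying physical channel.
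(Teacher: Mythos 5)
Your mixture-and-chain-rule skeleton is sound and is essentially a tidier version of what the paper does: writing $\mathrm P_\mathrm O$ as a convex combination weighted by $p=\mathrm e^{-\lambda_\mathrm O\tau}$ and $q=\mathrm e^{-(\lambda_\mathrm O+\lambda_\mathrm U)\tau}$ and showing $\partial_p\mathrm P_\mathrm O\le 0$, $\partial_q\mathrm P_\mathrm O\le 0$ is algebraically equivalent to the paper's derivative in (28), and both routes reduce the corollary to the same key orderings $\mathrm P_\mathrm{O,LL}\le\mathrm P_\mathrm{O,LN},\mathrm P_\mathrm{O,NL}\le\mathrm P_\mathrm{O,NN}$. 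The paper asserts these orderings by differentiating its closed-form expressions with respect to $Z$ (itself only sketched); you replace that with a stochastic-dominance/coupling argument on $\mathrm P(L_\mathrm D|H|+L_\mathrm R G<Z)$, which is the more natural route.

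However, the crux of your argument contains a genuine error: the claim that a Rayleigh magnitude \emph{with matched second moment} is stochastically smaller than the corresponding Rician magnitude is false. For nonnegative variables, $X\ge_{\mathrm{st}}Y$ implies $\mathbb E[X^2]=\int_0^\infty 2t\,\mathrm P(X>t)\,\mathrm dt\ge\mathbb E[Y^2]$ with equality only if the distributions coincide; so if you calibrate the Rayleigh parameter by $2\sigma_0^2=2\sigma_\mathrm m^2+A^2$ (equal average power), first-order stochastic dominance cannot hold and the two CDFs must cross (indeed the Rician has the smaller CDF near the origin but the lighter tail). Consequently the coupling as stated does not deliver the pointwise inequalities for all $Z$, which is exactly what the monotonicity proof needs. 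The fix is to impose a different parameter relation consistent with blockage physics, e.g.\ $\sigma_0\le\sigma_\mathrm m$ (NLOS diffuse power no larger than the LOS diffuse power), under which Rayleigh$(\sigma_0)\le_{\mathrm{st}}$ Rayleigh$(\sigma_\mathrm m)\le_{\mathrm{st}}$ Rician$(A,\sigma_\mathrm m)$; similarly the path-loss step needs $C_\mathrm N r^{-\alpha_\mathrm N}\le C_\mathrm L r^{-\alpha_\mathrm L}$ over the relevant distances (true for the paper's parameters with $C_\mathrm L=C_\mathrm N$ and $r\ge 1$\,m, but it should be stated). With such assumptions made explicit your dominance-plus-chain-rule proof goes through; without them the central step fails, and your closing hedge about "matching parameters consistently" does not repair the specific matched-second-moment claim on which the argument rests.
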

\begin{proof}
For any fixed $Z$, use $\mathrm P_\mathrm O$, $\mathrm P_\text{O,NN}$, $\mathrm P_\text{O,LL}$, $\mathrm P_\text{O,LN}$ and $\mathrm P_\text{O,NL}$ to represent $\mathrm P_\mathrm O(Z)$, $\mathrm P_\mathrm{O,NN}(Z)$, $\mathrm P_\mathrm{O,LL}(Z)$, $\mathrm P_\mathrm{O,LN}(Z)$ and $\mathrm P_\mathrm{O,NL}(Z)$ for simplicity. According to (25), taking the first-order derivation of $\mathrm P_\mathrm O$ with respect to $\lambda _\mathrm{O}$, we obtain
\begin{equation}
\begin{aligned}
\frac{\mathrm{d} \mathrm P_\mathrm O }{\mathrm{d} \lambda _\mathrm{O}}=&
\left( {\mathrm P_\text{O,NN}- \mathrm P_\text{O,NL} } \right) \left( { \mathrm e^{ - ({\lambda _\mathrm{O}+\lambda _\mathrm{U}})\tau } - \mathrm e^{ - ({2\lambda _\mathrm{O}+\lambda _\mathrm{U}})\tau } }\right)\tau + \\
&\left( {\mathrm P_\text{O,NN}- \mathrm P_\text{O,LN} } \right) \left( {\mathrm e^{-\lambda _\mathrm{O}\tau} - \mathrm e^{ - ({2\lambda _\mathrm{O}+\lambda _\mathrm{U}})\tau }}\right)\tau + \\
&\left( {\mathrm P_\text{O,LN} + \mathrm P_\text{O,NL} - 2\mathrm P_\text{O,LL} } \right)\mathrm e^{ - ({2\lambda _\mathrm{O}+\lambda _\mathrm{U}})\tau }\tau.
\end{aligned}
\end{equation}

Deriving the expressions of $\mathrm P_\text{O,NN}-\mathrm P_\text{O,NL}$, $\mathrm P_\text{O,NN}-\mathrm P_\text{O,LN}$ and $\mathrm P_\text{O,LN} + \mathrm P_\text{O,NL} - 2\mathrm P_\text{O,LL}$, and taking the first-order derivation of these expressions with respect to $Z$, we can prove that $\mathrm P_\text{O,NN}>\mathrm P_\text{O,LN}>\mathrm P_\text{O,LL}$ and $\mathrm P_\text{O,NN}>\mathrm P_\text{O,NL}>\mathrm P_\text{O,LL}$. Then we have $\frac{\mathrm{d} \mathrm P_\mathrm O }{\mathrm{d} \lambda _\mathrm{O}}>0$. Similarly, we have $\frac{\mathrm{d} \mathrm P_\mathrm O }{\mathrm{d} \lambda _\mathrm{U}}>0$. Thus, the proof is completed.

\end{proof}
\begin{cor}
According to (27), when $M\!=\!0$, the IRS is out of service and the outage probability reaches a peak value. However, when $M\!\to\! \infty$, the outage probability will approach 0.
\end{cor}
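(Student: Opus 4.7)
The plan is to evaluate the CLT-based expression (27) at the two endpoints $M=0$ and $M\to\infty$, exploiting the fact that both $\varepsilon$ and $\nu$ in (26) are affine functions of $M$. Writing $\varepsilon=L_\text D{\cal E}_{|H|}+M L_\text R{\cal E}_0$ and $\nu=L_\text D^2{\cal V}_{|H|}+M L_\text R^2{\cal V}_0$, where the $M$-independent constants ${\cal E}_0,{\cal V}_0$ are read off from (12) or (18) depending on the prevailing LOS/NLOS combination, both limits reduce to direct substitutions into (27).

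For the $M=0$ endpoint, the sum $G=\sum_{i=1}^{M}|h_{\mathrm{UI},i}||h_{\mathrm{IR},i}|$ is empty, so ${\cal E}_G={\cal V}_G=0$ and (26) collapses to $\varepsilon=L_\text D{\cal E}_{|H|}$, $\nu=L_\text D^2{\cal V}_{|H|}$. Plugging into (27) recovers exactly the CLT outage expression for the direct RSU-user link in isolation, i.e.\ the no-IRS baseline. To justify that this baseline is the peak, I would step back to (9): under the optimal phase alignment that gives rise to (9), the IRS contributes a non-negative amplitude term $L_\text R G\ge0$, so the SNR is non-decreasing in $M$ and hence $\mathrm{P_O}(Z)$ is non-increasing in $M$; the maximum therefore occurs at $M=0$.

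For the $M\to\infty$ endpoint, I would note from (12a)--(12b) and (18a)--(18b) that ${\cal E}_G$ and ${\cal V}_G$ each grow linearly in $M$ in every LOS/NLOS combination. Therefore $\varepsilon=\Theta(M)$ while $\sqrt{\nu}=\Theta(\sqrt{M})$, so the two erf arguments satisfy $(Z-\varepsilon)/\sqrt{2\nu}\to-\infty$ and $\varepsilon/\sqrt{2\nu}\to+\infty$. Invoking $\lim_{x\to\pm\infty}\mathrm{erf}(x)=\pm1$, expression (27) tends to $\tfrac{1}{2}(-1)+\tfrac{1}{2}(+1)=0$.

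The main subtlety lies in the ``peak'' claim at $M=0$: this is really a monotonicity assertion, and the cleanest route I see is the physical SNR argument from (9) rather than a brute-force derivative of (27). A direct computation of $d\mathrm{P_O}/dM$ produces a sign condition on the numerator of the derivative that is not manifestly negative without additional structural input (such as $Z$ being within a meaningful outage regime). I would therefore keep the monotonicity argument at the level of the underlying signal model and use (27) only to obtain the explicit value at $M=0$ and the vanishing limit as $M\to\infty$.
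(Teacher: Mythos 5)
Your proposal is correct, and for the $M\to\infty$ endpoint it follows essentially the same route as the paper: substitute the affine-in-$M$ expressions for $\varepsilon$ and $\nu$ from (26) into (27) and take limits of the error functions. The difference is in the asymptotic bookkeeping and in the treatment of the $M=0$ claim. The paper asserts that $\varepsilon^2$ and $\nu$ are \emph{both} second-order in $M$, so that $\varepsilon/\sqrt{2\nu}$ tends to a finite constant $C_0$ while $Z/\sqrt{2\nu}\to 0$, and concludes $\mathrm{P_O}\to\frac{1}{2}\mathrm{erf}(-C_0)+\frac{1}{2}\mathrm{erf}(C_0)=0$; your reading, with ${\cal V}_G$ linear in $M$ so that $\nu=\Theta(M)$ and $\varepsilon/\sqrt{2\nu}=\Theta(\sqrt{M})\to\infty$, is actually the consistent one given (12) and (18), and it reaches the same limit (indeed a faster decay, since both erf arguments diverge and the two halves cancel in the limit $\frac{1}{2}(-1)+\frac{1}{2}(+1)$). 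For the ``peak at $M=0$'' part, the paper offers only the remark that $\varepsilon$ and $\nu$ ``decrease'' with $M$ (which is backwards and does not establish monotonicity of (27)); your pathwise argument at the level of (9) --- each added nonnegative term $|h_{\mathrm{UI},i}||h_{\mathrm{IR},i}|$ can only increase $L_\mathrm{R}G$ and hence the SNR, so the true outage probability is non-increasing in $M$ --- is a cleaner and more defensible justification, at the cost of proving the peak for the exact outage probability rather than literally for the approximation (27), a caveat you correctly flag. In short: same skeleton for the vanishing limit, corrected scaling, and a genuinely stronger stochastic-dominance argument for the $M=0$ peak that the paper merely asserts.
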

\begin{proof}
With the increase of $M$, the values of $\varepsilon$ and $\nu$ will decrease.
In (26), when the value of $M \!\to\! \infty$, we have $\varepsilon \to \infty$ and $\nu \to \infty$. Since $\varepsilon^2$ and $\nu$ are both second order infinity of $M$, we have $\sqrt {\frac{1}{{2\nu }}} \varepsilon\to C_0$ and $\sqrt {\frac{1}{{2\nu }}} Z \to 0$, where $C_0$ is a constant value. Therefore, $\mathrm {P_O}(Z)$ will approach 0.
\end{proof}

Note that in reality, it is impossible for $M$ to be infinity and the outage probability would never become 0. In general, larger value of $M$ will reduce the outage probability.

\section{Numerical results}

In this section, we carry out Monte-Carlo simulations to evaluate the performance of IRS-assisted vehicular communication systems and the accuracy of analysis in previous sections. The simulation parameters are summarized
in Table I.

\begin{table}[h]
\centering
\caption{Main simulation parameters}
\begin{threeparttable}
\begin{tabular}{p{2.3cm}<{\centering}p{2.2cm}<{\centering}p{1.20cm}<{\centering}p{1.5cm}<{\centering}}
\toprule[2pt]

Parameter&Value&Parameter&Value\\
\midrule[1pt]
Lane width & 4 m&$\mathrm{\alpha(z) _L}$ & 2.8\cite{TassiModeling}\\

Vehicle length $\tau$& 5 m&$\mathrm{\alpha(z) _N}$ & 4\cite{Rappaport2013Broadband}\\

Carrier frequency $f$ & 28 GHz&$\sigma _\mathrm n^2$ & -39dBm\cite{TassiModeling}\\

\(\mathrm{C_L, C_N}\) & $ - 20{\rm log}\left( {4\pi f/c} \right)$ \tnote{1} & $P_\mathrm{tx}$ & 27dBm\cite{TassiModeling}\\
\bottomrule[2pt]
\end{tabular}
\begin{tablenotes}
\item[1] It is the free space path loss in dB at a distance of 1m and $c$ is the speed of light\cite{Shu2016Millimeter}.
\end{tablenotes}
\end{threeparttable}
\end{table}

As shown in Fig. 3, the width of each lane is 4m and the vehicle length is 5m. The coordinates of RSU and IRS are (10,0) and (22,8), respectively. The transmit power of RSU is 27dBm. Parameters related to pass loss are given in Table I.
\begin{figure}
    \centering
    \includegraphics[scale=0.37]{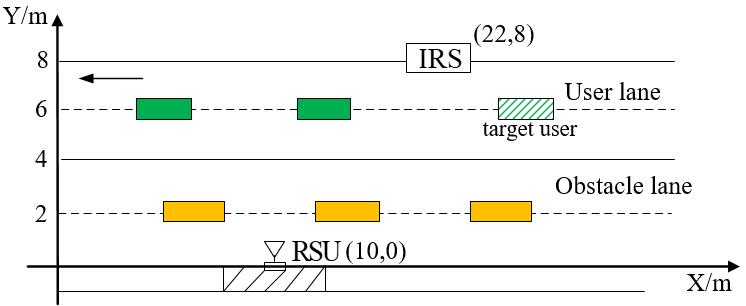}
    \caption{An illustration of the simulation setup}
\end{figure}

In Fig. 4, the outage probabilities calculated using SEA and CLA are compared, where the number of IRS elements is fixed at 100 and ${\lambda _{\rm{u}}}$ and ${\lambda _{\rm{o}}}$ are both 1/10, which means the average distance between two adjacent vehicles is 5m. This implies a very crowded scenario, where outage would be severe. As can be seen in Fig. 4, the outage probabilities differ greatly at different locations of the target user. For a given threshold $t$, the SEA and CLA proposed in this paper are very accurate, the results of which almost coincide with simulation results. SEA with $n\!=\!10$ is the most accurate, but the performance of $n\!=\!3$ is fine as well. CLA has excellent performance and its expression is concise, which makes it very helpful for analysis.

\begin{figure}
    \centering
    \includegraphics[scale=0.53]{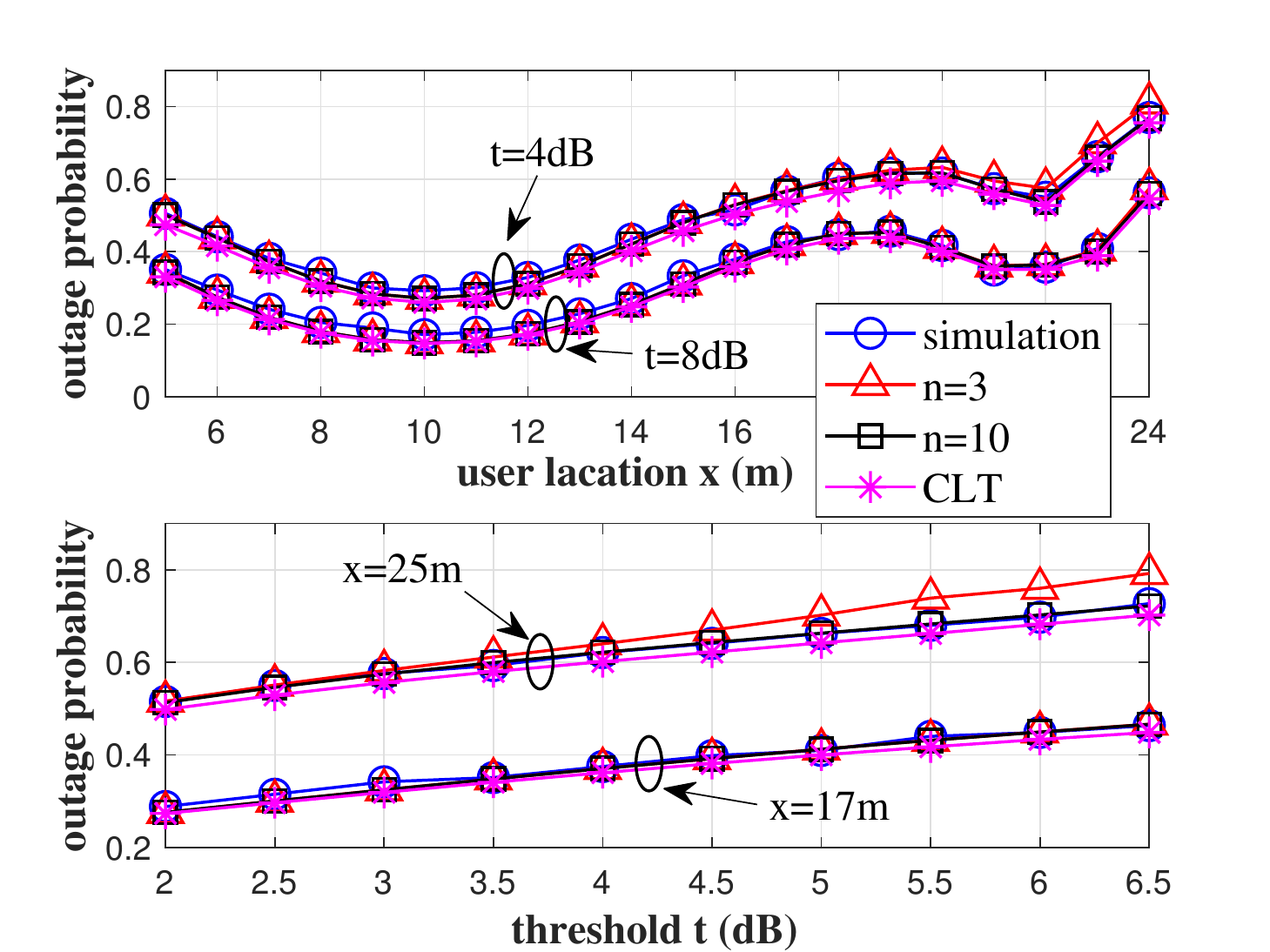}
    \caption{The comparison of simulation results and analytical results based on SEA and CLA}
\end{figure}

Fig. 5 depicts the relationship between outage probability and the user location with different number of IRS elements within the range of 0 to 500. Each row in Fig. 5 indicates the outage probability of the whole area for a specific value of $M$. It can be seen that the outage probability decreases with $M$. With large value of $M$, the outage probabilities in the vicinity of the IRS decrease significantly. When $M=500$, the outage probability near the IRS is even smaller than that near the RSU. The reason is that large IRS can greatly improve the coverage performance of the RSU by reflecting signals. Although mmWave vehicular communication has a small coverage range and suffers from severe blockage, IRS could help to increase the outage performance significantly. Compared with $M=0$ (no IRS used), the range for $\mathrm{P_O}<0.3$ increases for 3 times when $M=500$. Note that the range where IRS could work effectively is quite small. Therefore, it is feasible to deploy multiple collaborative IRSs along the roadside to improve the coverage performance of all regions.

\begin{figure}
    \centering
    \includegraphics[width=8cm,height=4.85cm]{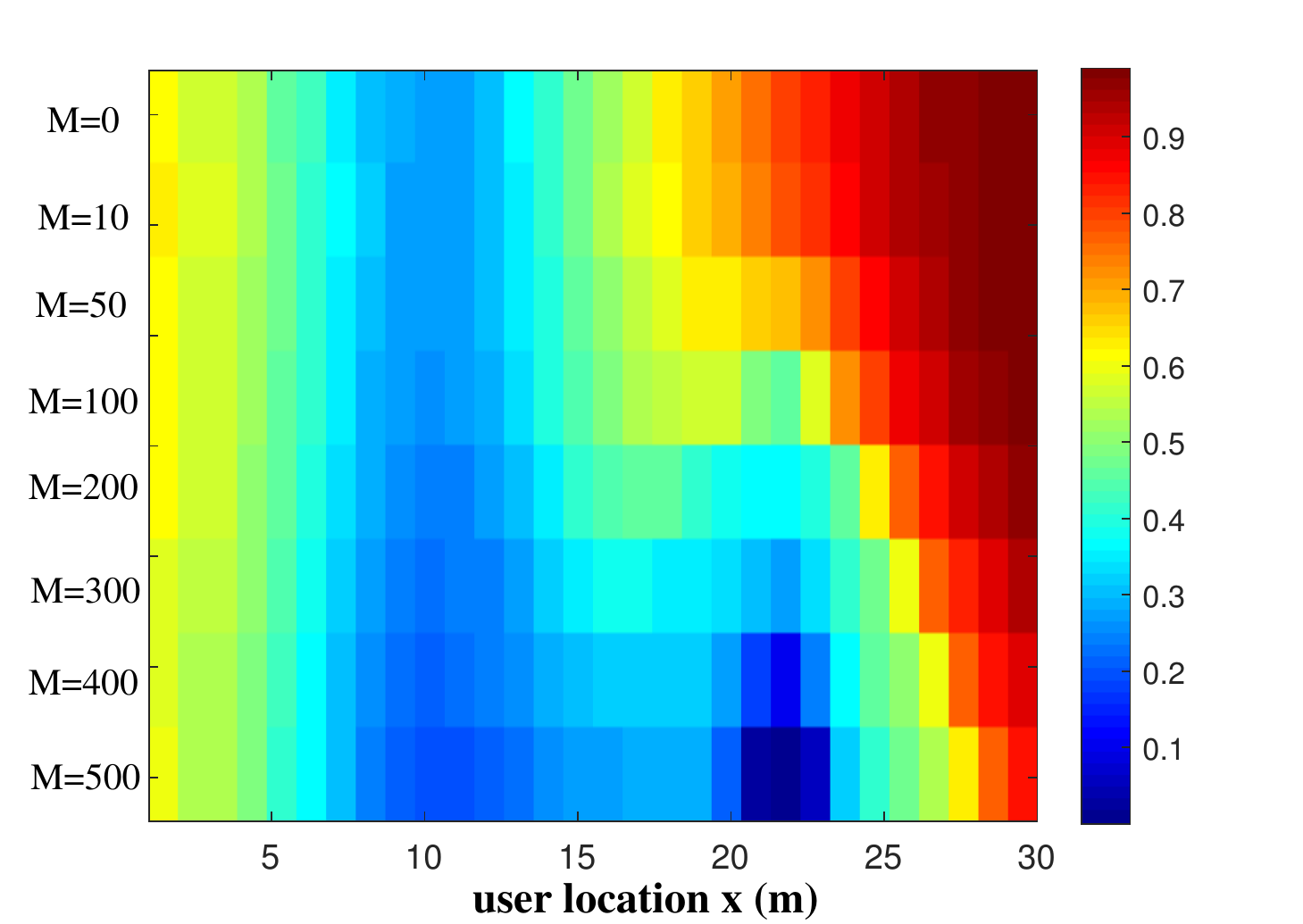}
    \caption{Outage probability versus different number of IRS elements, where $t=4\mathrm{dB}$}
\end{figure}

To evaluate the influence of vehicle distance $d$ on the outage probability, we assume that the average vehicle distance of the two lanes is identical. The results with  $d=\mathrm{5m}$ ($\tau$), $d=\mathrm{25m}$ (5$\tau$) and $d=\mathrm{50m}$ (10$\tau$) are depicted in Fig. 6.
As the user location changes, the outage probability reaches its minimum at $x=22\mathrm m$, where the IRS locates.
It can be concluded that the outage probability decreases with respect to $d$ due to the reduced probability of blockage.
\begin{figure}
    \centering
    \includegraphics[scale=0.49]{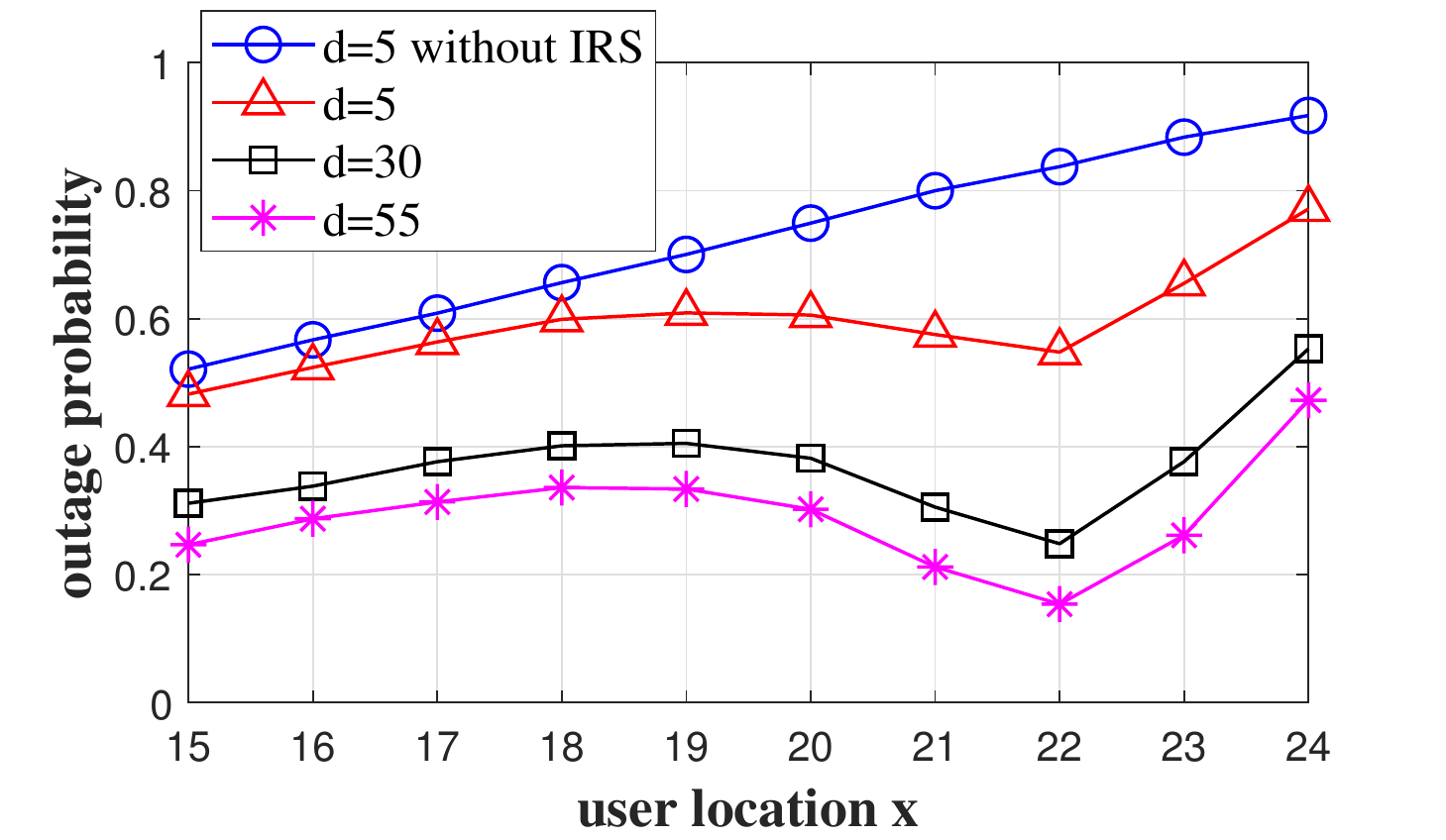}
    \caption{Outage probability versus different vehicle distance}
\end{figure}

\section{Conclusion}
In this work, we studied the outage probability for IRS-assisted vehicular communication systems. Accurate approximate expressions of outage probability were provided by means of SEA and CLA. Numerical results validate the analysis. It is shown that the outage probability can be reduced significantly by using large IRS even for very crowded road.

\appendices

\section*{Acknowledgment}

This work was supported in part by the Key Research \& Development Plan of Jiangsu Province (No. BE2018108), National Nature Science Foundation of China (Nos.61701198 \& 61772243), Nature Science Foundation of Jiangsu Province (No. BK20170557) and Young Talent Project of Jiangsu University.

\bibliographystyle{IEEEtran}
\bibliography{IEEEabrv,My}

\end{document}